\documentclass[12pt]{article}
\usepackage[english]{babel}
\usepackage[letterpaper,top=2.5cm,bottom=2.5cm,left=2.5cm,right=2.5cm,marginparwidth=1.75cm]{geometry}
\usepackage{amsmath,amsthm,amssymb,enumitem,bbm}
\usepackage{graphicx}
\usepackage{bm}
\usepackage[colorlinks = true,linkcolor = blue,urlcolor  = blue,citecolor = blue,anchorcolor = blue,hyperfootnotes=false]{hyperref}
\usepackage[symbol]{footmisc}
\usepackage{booktabs}

\usepackage[giveninits=true,backend=biber, maxnames=4,maxbibnames=99,maxalphanames=4,style=alphabetic,doi=true,isbn=true,url=false,eprint=false]{biblatex}
\renewbibmacro{in:}{}
\AtEveryBibitem{%
  \clearlist{language}%
  \clearfield{note}%
  \clearfield{issn}%
}
\AtBeginBibliography{\emergencystretch=3em}
\DeclareSourcemap{%
  \maps[datatype=bibtex]{%
    \map{%
      \step[fieldsource=entrykey, match=\regexp{sp_global_market_intelligence_credit_2024}, final]%
      \step[fieldset=shorthand, fieldvalue={SPG24}]%
    }%
    \map[overwrite]{\step[fieldset=abstract, null]}%
    \map[overwrite]{\step[fieldset=rights, null]}%
    \map[overwrite]{\step[fieldset=note, null]}%
    \map[overwrite]{
      \step[fieldsource=shortjournal, final]%
      \step[fieldset=journaltitle, origfieldval]%
    }%
  }%
}

\newcommand{\R}{\mathbb{R}}
\newcommand{\calF}{\mathcal{F}}

\newcommand{\lb}{\left}
\newcommand{\rb}{\right}
\newcommand{\ind}{\mathbbm{1}}
\newcommand{\real}{\operatorname{Re}}
\newcommand{\wt}{\widetilde}
\newcommand{\bQ}{\mathbb{Q}}
\newcommand{\bF}{\mathbb{F}}
\newcommand{\E}{\mathbb{E}}
\newcommand{\Exp}{\operatorname{Exp}}

\theoremstyle{plain}
\newtheorem{theorem}{Theorem}[section]
\newtheorem{lemma}[theorem]{Lemma}

\theoremstyle{remark}
\newtheorem{remark}[theorem]{Remark}

\allowdisplaybreaks[3]

\begin{document}

\begin{center}
\Large
Multi-Credit Calibration via\\Elastically Stopped L\'{e}vy Processes

\normalsize

\vspace{1em}

Graeme Baker\footnote{Department of Statistics, Columbia University, NY, USA \href{mailto:g.baker@columbia.edu}{g.baker@columbia.edu}} and Agostino Capponi\footnote{Department of Industrial Engineering and Operations Research, Columbia University, NY, USA \href{mailto:ac3827@columbia.edu}{ac3827@columbia.edu}} 

\vspace{2em}

\end{center}

\begin{abstract}
We calibrate credit default swaps and index tranches with elastically stopped L\'evy processes: each firm defaults when the running supremum of a latent, spectrally positive distress process crosses an independent exponential barrier. This yields a Cox construction with totally inaccessible default times, while retaining the interpretability and explicit formulas of a structural approach. Adding a single common compound Poisson jump factor to every firm's latent driver gives a parsimonious multi-credit model with simultaneous defaults, which is priced by an exact Wiener--Hopf Monte Carlo scheme. Its tractability rests on a single-name result we prove: a finite partial-fraction formula for the Laplace transform of the default probability under phase-type jumps. On daily CDX North American High-Yield and Investment-Grade panels, our drivers attain the lowest out-of-sample errors in a six-model field and reproduce the inverted spread curves of names heading into default, which a L\'evy subordinator provably does not. At the index level, the two-parameter dependence structure closes $73\%$ to $89\%$ of the tranche pricing gap left by independent marginals with the dependence parameters frozen, and up to $95\%$ once re-marked to tranche quotes; our framework dominates a single-factor Gaussian copula and the affine intensity benchmark of Duffie--G\^arleanu on both indices.
\end{abstract}

\section{Introduction}
Credit-risk models are typically classified in one of two forms: \emph{structural approaches}, in which each firm defaults when its asset value falls below its liabilities, and \emph{intensity-based approaches}, in which the default time is specified exogenously. The most common form of an intensity-based approach is the Cox construction, in which the default time is the first time an increasing process reaches an independent exponentially distributed random variable. What distinguishes it from the structural approach is that the process being tracked is latent, rather than linked to balance sheet fundamentals. Herein, we bridge the two approaches by driving Cox constructions with the running suprema of latent L\'{e}vy processes, which recasts the default times as \emph{elastic stopping times} at the suprema. This framing brings with it the well-developed first-passage theory of L\'{e}vy processes, yielding explicit formulas for the Laplace transform of credit default swap (CDS) term structures in the case of spectrally positive L\'{e}vy processes, as well as a parsimonious and tractable framework for joint defaults and multi-credit calibration.

We fix a filtered probability space $(\Omega,\calF,\bF=(\calF_t)_{t\ge0},\bQ)$ under a risk-neutral measure and consider $n\ge 1$ firms at risk of default. For $1\le i\le n$, we model the default time $\tau_i$ by
\begin{equation}\label{eq:taudef}
\tau_i=\inf\lb\{t\ge 0: \Lambda^i_t\ge \xi_i\rb\}
\end{equation}
where $\xi_1,\dots,\xi_n$ are mutually independent $\Exp(1)$ random variables which are also independent of $\calF_\infty$, and $\Lambda^i$ is a non-negative, non-decreasing right-continuous process. In this work, we take
\begin{equation}\label{eq:Lambdadef}
\Lambda^i_t = \sup_{s\le t} \lb(X_s^i\rb)_+,
\qquad\text{where } (z)_+:=\max(0,z) \text{ for } z\in\R,
\end{equation}
and $X^i$ is a latent distress process. Since $z\mapsto (z)_+$ is non-decreasing, $\Lambda^i$ is equivalently given as the positive part of the running supremum of $X^i$, and it is thus a genuine cumulative-hazard process, non-negative and non-decreasing, with $\Lambda_0^i=(X_0^i)_+$. The potential tractability of this model boils down to the computability of the default probabilities. Note first that $\bQ\lb(\tau_i\le T\rb)=\bQ\lb(\Lambda_T^i\ge \xi_i\rb)$. Iterated conditioning and the cumulative distribution function (CDF) of an $\Exp(1)$ random variable yield
\begin{equation}\label{eq:pd}
\bQ\lb(\tau_i\le T\rb)=\E^\bQ\lb[\E^\bQ\lb[\ind_{\{\Lambda_T^i\ge \xi_i\}}\mid \calF_T\rb]\rb]=\E^\bQ\lb[1-e^{-\Lambda_T^i}\rb].
\end{equation}
The goal of this article is to calibrate \eqref{eq:pd} to CDS data by choosing $X$ from a suitable class of L\'{e}vy processes. Our destination is a multi-credit model for pricing index tranches, and along the way we develop the single-name theory that makes the model both exact to price and cheap to simulate. We put forward three main contributions:
\begin{enumerate}[leftmargin=2em]
\item \emph{An exact pricing formula for CDS using elastically stopped spectrally positive L\'{e}vy processes.} The drivers we calibrate (see \eqref{eq:driver} below) have no negative jumps, rendering \eqref{eq:pd} computable through the fluctuation theory of spectrally positive L\'evy processes (see, for instance, \cite{kyprianou_fluctuations_2014,cohen_theory_2012}). For any jump law in the phase-type class (which contains the exponential and Erlang laws, is closed under sums and mixtures, and is dense in the laws on $[0,\infty)$), Theorem~\ref{thm} gives the Laplace transform of the default probability as a finite partial-fraction formula in the roots of a single polynomial. In practice, the Laplace transform is inverted with a Talbot contour (see Appendix~\ref{sec:Talbot}). We will see that the roots of the target polynomial depend only on a set of static parameters, so re-marking a firm by its initial distress level amounts to evaluation of a single vector of complex exponentials.
\item \emph{A joint model and an exact simulation method for multi-credit pricing.} Adding a single common compound Poisson jump factor to every firm's latent driver produces simultaneous defaults while preserving phase-type marginals (Section~\ref{sec:common-wh}). The joint law is priced by a Wiener--Hopf Monte Carlo scheme whose exponential clock is perfectly suited for the common factor's arrivals, and whose sampling distributions are assembled from the same polynomial roots as the pricing formula of the previous item, so the joint model runs on the machinery already built for the marginals. The scheme carries no time-discretization bias, and it reaches the accuracy of a fine-grid Euler simulation well over an order of magnitude faster.
\item \emph{Empirical validation against structural, affine, and copula benchmarks on two indices.} On daily constituent panels of the CDX North American High-Yield (NAHY) and Investment-Grade (NAIG) indices, the spectrally positive L\'{e}vy drivers attain the lowest in-sample and out-of-sample errors in a six-model field. Furthermore, they reproduce the inverted spread curves of names heading into default, which are empirically more difficult to capture with an affine intensity model, and provably cannot be captured by a L\'{e}vy subordinator. At the index level, the common-jump construction closes $73\%$ to $89\%$ of the tranche pricing gap left by independent marginals with the dependence parameters frozen, and up to $95\%$ with the dependence parameters re-marked. Our method dominates a one-factor Gaussian copula benchmark \cite{li_default_2000} and an affine intensity benchmark \cite{duffie_risk_2001} on both indices.
\end{enumerate}

A shortcoming of the structural approach for pricing credit derivatives is the predictability of the default time. In the Black--Cox model \cite{black_valuing_1976} the default time is the first passage of a continuous process below a barrier, and is therefore an \emph{accessible} stopping time: it is announced by the approach of the process to the barrier, so a firm which is not currently near default cannot default in the next instant (see \cite[Chapter III]{protter_stochastic_2005}). The consequence is visible in prices. Short-dated spreads implied by a Black--Cox model vanish as the maturity shrinks (see \cite[Subsection 3.1.5]{brigo_counterparty_2013}), and our own results corroborate this, with Black--Cox producing the largest in-sample calibration error on the investment-grade index.

Two further repairs are available, and we find that they work best in combination. The first is to randomize the barrier that triggers default. Replacing the fixed level of the structural model by the independent $\Exp(1)$ barrier in equation \eqref{eq:taudef} makes $\tau_i$ totally inaccessible even when $X^i$ has continuous paths. We will also refer to the default mechanism of $\tau_i$ as \emph{elastic stopping} of $X^i$ at its supremum: the running supremum increases only at the times when the driver reaches a new maximum, and the process is therefore killed at a rate proportional to the local time the driver spends at its running supremum. The second repair is to give the latent process jumps, which aligns with the modelling intuition that distress arrives as a sequence of sudden adverse shocks and moves gradually in between. Combining the two, the model to keep in mind is elastic stopping of the driver
\begin{equation}\label{eq:driver}
X^i_t = -\delta_i + \mu t + \sigma B_t + \sum_{k\le N_t} U_k,
\qquad U_k\ \text{i.i.d.}\ \Exp(\eta),
\end{equation}
where $\mu\in\R$, $\sigma\ge0$, $\eta>0$, $N$ is a Poisson process of rate $\lambda_J>0$ independent of the Brownian motion $B$, and $\delta_i\ge0$ is an initial \emph{distance-to-distress}. Notice that we keep the interpretable latent process of the structural approach, which the intensity-based approach gives up, while obtaining the totally inaccessible default time which the structural approach cannot produce. For $\sigma>0$ we call this driver LBM+Exp, for linear Brownian motion with exponential jumps; our main theorem handles the general class of phase-type jumps.

Equation \eqref{eq:driver} nests the special cases we consider in the paper. Taking $\sigma=0$ and $\mu<0$ recovers the classical Cram\'er--Lundberg risk process of insurance mathematics, stopped at a randomized upper barrier rather than ruin at zero, and taking $\lambda_J=0$ recovers the linear Brownian motion of Black--Cox, so that \eqref{eq:driver} adds together the foundational first-passage processes of insurance and of finance. Lastly, taking $\sigma=0$ and $\mu\ge0$ makes $X^i$ non-decreasing, a L\'evy \emph{subordinator}, which is the driver class already used elsewhere in the multi-credit literature (see, for instance, \cite{bo_counterparty_2015,sun_marshallolkin_2017}). The distinction to have in mind between these subcases is one of chronic versus acute distress. A subordinator only ever climbs, so its accumulated hazard rises to a plateau and can never decline; it models a firm which only ever deteriorates. A driver with $\mu<0$ instead recovers between shocks, so a firm surviving a bout of acute distress has improved prospects afterwards, and its hazard can decay. The distinction is falsifiable, since a subordinator can never produce an inverted CDS spread curve while the $\mu<0$ class can, and Section~\ref{sec:onesided} makes both claims precise. The data itself favours the negative-drift class, with most high-yield firms fitting $\mu<0$ when the drift is left free (Subsection~\ref{sec:inversion}).

We calibrate and rank this driver family, together with Black--Cox and a Cox--Ingersoll--Ross intensity model with and without jumps (CIR and JCIR, as in \cite{brigo_counterparty_2013}), against daily CDS quotes on both indices in a rolling-window design. LBM+Exp attains the lowest median error on both indices, Cram\'er--Lundberg is close behind with one fewer parameter, and the subordinator trails both clearly. The sharper test is the falsifiable one above. Genuine spread-curve inversions are rare, concentrated in distressed CCC names, and disproportionately precede bankruptcy. Out of sample, our negative-drift drivers reproduce over $80\%$ of them, using the state $\delta_i$ as a shape parameter rather than by fitting the curve's slope directly; the subordinator reproduces none, exactly as the theory predicts; and the affine benchmarks reproduce them far less consistently (Subsection~\ref{sec:inversion}).

Turning to the multi-credit picture, our goal is to price index tranches while staying faithful to the single-name marginals. Summing each tranche's absolute error as a fraction of its notional across the four tranches, independent marginals miss a year of tranche quotes by $0.40$ to $0.44$ per board (see Appendix~\ref{sec:tranche-calibration} for the pricing conventions). We close most of that gap by adding a single common compound Poisson shock onto every firm's driver. At each arrival of the shock, every firm's distress jumps by the same amount, so that any number of firms may default simultaneously. Two observations make the construction inexpensive. Seen from one firm, the driver accumulates two independent exponential jump streams, so the marginal is again phase-type and remains within reach of our single-name theory. Seen from the simulator, the common shock arrives at exponentially distributed times, which is precisely the time randomization on which the Wiener--Hopf scheme is built. The dependence requires only two extra parameters, and out of sample it closes $73\%$ to $89\%$ of the independence gap with the dependence parameters frozen, and up to $95\%$ fully re-marked, dominating both a one-factor Gaussian copula \cite{li_default_2000} and an affine intensity benchmark \cite{duffie_risk_2001} (6 common parameters) on both indices. The structural separation among the three subcases of \eqref{eq:driver} persists at this level, much as it does for single names.

The rest of the paper is organized as follows. After a review of the literature, Section~\ref{sec:onesided} develops the computation of \eqref{eq:pd} and the structural separation inherent in \eqref{eq:driver}. In Section~\ref{sec:common-wh}, we build the common-jump joint model and its exact Wiener--Hopf Monte Carlo scheme, and benchmark the scheme's speed. Section~\ref{sec:data} describes the data and the conventions used downstream. Sections~\ref{sec:calibration} and \ref{sec:joint-calibration} calibrate the marginals and the joint law respectively, and Section~\ref{sec:conclusion} concludes.

\subsection{Literature Review}\label{sec:litreview}

First-passage modelling of solvency dates back to Lundberg's insurance risk model from 1903 \cite{lundberg_approximerad_1903}: an insurer's reserves grow at a constant rate as premiums accrue, fall by a jump at each claim, and ruin is the first passage below zero. The negative of Lundberg's process is exactly the distress process of \eqref{eq:driver} at $\sigma=0$: distress drifts downward as the firm services its obligations and jumps upward when a shock lands, and default occurs when distress sets a sufficiently high record. 
Cram\'er extended and popularized this theory \cite{cramer_historical_1994}, and the resulting Cram\'er--Lundberg process is an instance of the general class of processes with stationary and independent increments, the L\'evy processes we use throughout.

The two classical strands of credit modelling are by now standard, and we recall them only to place our present construction. Structural models begin with \cite{merton_pricing_1974}, where default occurs if the value of the firm's assets falls short of its liabilities at a fixed maturity, and with \cite{black_valuing_1976}, where default is instead the first passage of the asset value below a barrier at any time before maturity. \cite{leland_optimal_1996} make the barrier itself endogenous, choosing it optimally together with the maturity structure of the debt, and obtain term structures of credit spreads which match more closely with historical averages. Despite their economic appeal, structural credit risk models have seen limited use in pricing credit-sensitive products, since in their basic forms they struggle to fit the observed term structure of credit spreads. Intensity-based models take the opposite starting point and specify the law of the default time directly. The Cox construction underlying \eqref{eq:taudef} is due in this context to \cite{lando_cox_1998}, where the intensity is driven by an exogenous state process. \cite{duffie_multi-period_2007} demonstrate the empirical reach of that approach for multi-period default prediction, driving the intensity with firm-level covariates that include the distance to default of the structural lineage. For further background, we refer the reader to the textbooks \cite{bielecki_credit_2004} and \cite{brigo_counterparty_2013}.

The two credit methodologies come together when a structural mechanism generates an intensity because of incomplete information on balance sheet fundamentals. \cite{duffie_term_2001} suppose that bondholders observe the firm's assets only through periodic and imperfect accounting reports, and show that the resulting filtration admits a default intensity even though the underlying first-passage time does not. \cite{cetin_modeling_2004} obtain a related reduction under partial information about the firm's cash flows. In those models, the asset process is imperfectly observed, and the intensity arises from the coarsening of the filtration; for us, the distress process is a latent state rather than a balance-sheet quantity, yet the model's own filtration carries it in full, and total inaccessibility comes from the independent exponential threshold rather than from discarded information.

The randomized threshold in \eqref{eq:taudef} is an instance of elastic stopping, which 
has found recent use in systemic-risk and mean-field contexts; see \cite{baker_zero_2022,hambly_mckean-vlasov_2022,hambly_spde_2025}.
These works show that the elastic formulation restores regularity over a Black--Cox style hitting-time approach.
The hitting-time formulation itself, in which a large number of firms interact endogenously through the times at which they default, is developed in \cite{nadtochiy_particle_2019,delarue_particle_2015}, with associated free-boundary and SPDE limits in \cite{hambly_spde_2019,baker_zero_2022,baker_singular_2024} and the common-noise case in \cite{hambly_contagious_2025}. We do not pursue a mean-field limit here, and the common-factor construction of Section~\ref{sec:common-wh} is a finite-dimensional object; the connection to our work is that the device which regularizes those problems, stopping against an independent exponential level in place of a fixed boundary, is the same device which gives our default times their short-end behaviour.

L\'evy processes have been used as credit drivers before, usually as subordinators, which we retain as a benchmark. \cite{bo_counterparty_2015} model default clustering for counterparty risk in CDS by driving the cumulative hazards with a subordinator, and \cite{sun_marshallolkin_2017} show how Marshall--Olkin distributions arise from subordinators and give efficient simulation schemes for the resulting credit portfolios. Multivariate subordination of Markov processes, with credit among its applications, is developed in \cite{mendoza-arriaga_multivariate_2016} and \cite{mai_tractable_2009}. 

Our benchmarks for the joint law come from the portfolio credit literature. The one-factor Gaussian copula of \cite{li_default_2000} became the market standard for tranche pricing, and its limitations are well documented (see the discussion in \cite[Section 3.4]{brigo_counterparty_2013}). A single correlation cannot price the whole tranche stack at once, so market practice is to fit a different correlation to each attachment point (i.e.~\emph{base correlation skew}), yielding four separate models instead of one joint model. \cite{duffie_risk_2001} adopt an intensity-based approach, giving each firm a linear combination of an idiosyncratic and a common JCIR process. We take \cite{li_default_2000} and \cite{duffie_risk_2001} as our two benchmarks in Section~\ref{sec:joint-calibration}, as they are respectively the industry and academic standards.

Our core mathematical machinery relies on the fluctuation theory of L\'evy processes. \cite{kyprianou_fluctuations_2014} is an excellent introductory reference, and more details on the scale functions we use in Section~\ref{sec:onesided} are collected in \cite{cohen_theory_2012}. The phase-type and matrix-exponential classes of jump law, which render those scale functions explicit, are treated in \cite{asmussen_applied_2003} and \cite{bladt_matrix-exponential_2017}. \cite{pistorius_maxima_2006} obtains the law of the maximum for a dense class of L\'evy processes with rational transforms, and \cite{lewis_wiener-hopf_2008} give the Wiener--Hopf factorization when the positive jumps have rational transform, which is the case we exploit.  \cite{egami_phase-type_2014} fit scale functions by phase-type approximation, which is another clever use of the same structure, but would lead to over-parametrization in our case. For simulation, we follow the Wiener--Hopf Monte Carlo approach formulated in \cite{kuznetsov_wienerhopf_2011}, extended to path functionals in \cite{ferreiro-castilla_applying_2015}, and with explicit factorizations for particular families collected in \cite{kuznetsov_wienerhopf_2010}. Other L\'evy classes (stable, $\beta$-class, symmetric) admit related representations of the supremum law through self-similarity or the Wiener--Hopf factorization \cite{kuznetsov_wienerhopf_2010,kwasnicki_suprema_2013}, but proved either too restrictive or too costly to calibrate. Structural first-passage models such as Black--Cox and Duffie--Lando \cite{duffie_term_2001} can be embedded in \eqref{eq:taudef}--\eqref{eq:Lambdadef} only at the cost of breaking either the L\'evy property or the independence of $\xi_i$, which is why Black--Cox enters our field in its native formulation.

Finally, it is worth recalling where the running suprema of L\'{e}vy processes have occurred elsewhere in mathematical finance. \cite{asmussen_russian_2004} price Russian and American puts under exponential phase-type L\'evy models, the supremum entering through the optimal stopping boundary; \cite{bernyk_predicting_2011} study the prediction of the ultimate supremum of a stable process with no negative jumps; and \cite{kyprianou_distributional_2007} give a distributional study of de Finetti's dividend problem for a general L\'evy insurance risk process. We are not aware of previous work in which the running supremum of a \emph{non-subordinator} L\'evy process serves as the cumulative hazard in a Cox construction calibrated to market data.

\section{Default Probabilities for Spectrally Positive Drivers}\label{sec:onesided}

The running suprema of L\'{e}vy processes are well-studied objects from the fluctuation theory of L\'evy processes. Recall that if $X^i$ is a real-valued L\'evy process, its \emph{L\'evy--Khintchine exponent} $\Psi_i$ is defined by
\[
\E^\bQ\lb[e^{i\theta X^i_t}\rb] = e^{-t\Psi_i(\theta)}, \qquad \theta\in\R.
\]
The L\'evy--Khintchine theorem (see, e.g., \cite[Theorem 1.3]{kyprianou_fluctuations_2014}) decomposes any such exponent into its \emph{triplet} $(\mu, \sigma^2, \nu)$ form,
\[
\Psi_i(\theta)
=
-i\mu\theta + \tfrac{1}{2}\sigma^2\theta^2
+ \int_{\R\setminus\{0\}}\lb(1 - e^{i\theta x} + i\theta x\ind_{\{|x|<1\}}\rb)\,\nu(dx),
\]
where $\mu\in\R$ is a drift, $\sigma^2\ge 0$ is the variance of the continuous-martingale part, and $\nu$ is a $\sigma$-finite L\'evy measure on $\R\setminus\{0\}$ with $\int (1\wedge x^2)\,\nu(dx)<\infty$. The process $X^i$ admits the path-level decomposition
\[
X^i_t = \mu t + \sigma B_t + \int_{|x|<1}\!\!\!x\,\bigl(N(t,dx) - t\,\nu(dx)\bigr) + \int_{|x|\ge 1}\!\!\!x\,N(t,dx),
\]
where $B$ is a standard Brownian motion and $N$ is an independent Poisson random measure on $[0,\infty)\times\R\setminus\{0\}$ with intensity $dt\otimes \nu(dx)$. We see that $X^i$ is the sum of a drift, a Brownian component, a compensated ``small jump'' martingale, and a ``big jump'' compound Poisson process with rate $\nu(\{|x|\ge1\})$. The jump components give the class a huge range of flexibility for fitting distributions beyond Gaussian (see, for instance, the use of CGMY L\'{e}vy processes to fit equity returns in \cite{carr_fine_2002}).
We will also add an initial condition $X_0^i=x_0^i\neq 0$ to the above sum, and we will be explicit when we do so. 

We pursue here the computation of \eqref{eq:pd} using spectrally positive L\'evy processes. Given an initial condition $x_0^i$, we write
\[
X_t^i=x_0^i+\wt X_t^i,\qquad t\ge 0,
\]
where $\wt X^i$ is a spectrally positive L\'evy process with $\wt X_0^i=0$. Outside this subsection, we will write $X_0^i=-\delta$ with $\delta\ge 0$ when the initial condition is non-positive. This reparameterization is already implicit in \eqref{eq:driver}. It puts the driver on the safe side of the exponential barrier, so that there is no probability of instantaneous default (equivalently, the barrier for $\wt X^i$ is shifted up by $\delta$). We call $\delta$ the \emph{distance-to-distress}.

Throughout the paper, we follow the convention that subordinators, i.e.~non-decreasing L\'{e}vy processes, are not spectrally positive. Note that we have included a truncation by $(\cdot)_+$ in \eqref{eq:Lambdadef} so that $\Lambda^i$ gives a non-negative hazard process, but this can be handled with careful bookkeeping. The running supremum of $X^i$ is given by
\[
\sup_{0\le s\le t}X_s^i
=x_0^i+\wt\Lambda_t^i,
\qquad
\wt\Lambda_t^i:=\sup_{0\le s\le t}\wt X_s^i,
\]
so that $\Lambda^i_t=\lb(x_0^i+\wt\Lambda^i_t\rb)_+$ by \eqref{eq:Lambdadef}. Since $\Lambda^i$ is non-negative, the identity
 \[
1-e^{-z}
=
\int_0^\infty e^{-y}\ind_{\{z>y\}}\,dy,
\qquad z\ge 0
\]
yields
\begin{equation}\label{eq:FT}
\bQ(\tau_i\le T)
=
\E^\bQ\!\lb[1-e^{-\Lambda_T^i}\rb]
=
\int_0^\infty e^{-y}\,
\bQ\!\lb(\Lambda_T^i>y\rb)\,dy,
\end{equation}
and for $y\ge0$ we have $\{\Lambda^i_T>y\}=\{\sup_{s\le T}X^i_s>y\}$. The running supremum $\wt\Lambda^i$ is the quantity studied in the literature, and so we continue to keep track of the shift $x_0^i$ below.

\subsection{Laplace Transform of the Default Probability}\label{sec:fluctuation}

\cite[Chapter 8]{kyprianou_fluctuations_2014} gives a concise introduction to fluctuations of L\'{e}vy processes in the spectrally negative case, and we apply these results to the running infimum of the dual spectrally negative process $-X^i$. Define the upward passage time of $X^i$ above level $x$ by
\[
\sigma_{i,x}^+ := \inf\{t\ge0:X_t^i>x\},\qquad x\ge0.
\]
Note that
\[
\{\Lambda_T^i>x\}=\{\sigma_{i,x}^+\le T\}.
\]
Write
\[
\psi_i(\theta)=\log \E^\bQ\lb[e^{-\theta X_1^i}\rb],\qquad \theta\ge0,
\]
for the Laplace exponent (technically, this is the Laplace exponent of $-X^i$) and let $\beta_1(q)$ be the largest non-negative solution of $\psi_i(\theta)=q$.

Next, we consider the $q$-\emph{scale functions}, $W_i^{(q)}$ and $Z_i^{(q)}$. $W_i^{(q)}$ is the unique function with $W_i^{(q)}(x)=0$ for $x<0$ and
\[
\int_0^\infty e^{-\beta x}W_i^{(q)}(x)\,dx
=
\frac{1}{\psi_i(\beta)-q},
\qquad \beta>\beta_1(q).
\]
We define $Z_i^{(q)}$ by
\[
Z_i^{(q)}(x)=1+q\int_0^x W_i^{(q)}(y)\,dy,\qquad x\in\R,
\]
so that $Z_i^{(q)}(x)=1$ for $x<0$, since $W_i^{(q)}$ vanishes there.

We present our first result:
\begin{lemma}\label{lem:0init}
Suppose that $\wt X^i$ is a spectrally positive L\'{e}vy process with $\wt X_0^i=0$. Then for $q>0$, we have that
\begin{align*}
\int_0^\infty e^{-qT}\,\bQ(\wt\Lambda_T^i>x)\,dT
&=
\frac{1}{q}
\lb(
Z_i^{(q)}(x)-\frac{q}{\beta_1(q)}W_i^{(q)}(x)
\rb).
\end{align*}
\end{lemma}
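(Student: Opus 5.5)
The plan is to move to the dual spectrally negative process and read the identity off the classical one-sided exit formula of \cite[Theorem 8.1]{kyprianou_fluctuations_2014}. Set $Y:=-\wt X^i$. This is a spectrally negative L\'evy process started at $0$, and it is not the negative of a subordinator, by the paper's convention. Its Laplace exponent $\log\E^\bQ[e^{\theta Y_1}]$ is exactly the function $\psi_i$ defined above. Its right inverse $\Phi(q)$ is the largest non-negative root of $\psi_i(\theta)=q$, that is $\Phi(q)=\beta_1(q)$, and $\beta_1(q)>0$ for $q>0$. Its scale functions are $W_i^{(q)}$ and $Z_i^{(q)}$, because these were defined through $\psi_i$ in precisely Kyprianou's normalization.

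For $x\ge0$ I would write $\{\wt\Lambda_T^i>x\}=\{\inf_{s\le T}Y_s<-x\}$. I then compare this event with $\{\tau^-_{-x}\le T\}$, where $\tau^-_{-x}:=\inf\{t\ge0:Y_t<-x\}$, which equals $\sigma^+_{i,x}$ for $\wt X^i$.
\begin{itemize}
\item The inclusion $\{\wt\Lambda_T^i>x\}\subseteq\{\tau^-_{-x}\le T\}$ is immediate.
\item Conversely, on $\{\tau^-_{-x}\le T\}$ the two events can differ only when $\tau^-_{-x}=T$. This is the case where the path creeps down to $-x$ exactly at time $T$.
\item For each $\omega$ this happens for at most one value of $T$, so it is a Lebesgue-null set of $T$.
\end{itemize}
Hence, by Tonelli,
\[
\int_0^\infty e^{-qT}\bQ(\wt\Lambda_T^i>x)\,dT=\E^\bQ\!\lb[\int_{\tau^-_{-x}}^\infty e^{-qT}dT\rb]=\frac1q\,\E^\bQ\!\lb[e^{-q\tau^-_{-x}}\ind_{\{\tau^-_{-x}<\infty\}}\rb].
\]

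By spatial homogeneity, $\E_0[e^{-q\tau^-_{-x}}]=\E_x[e^{-q\tau^-_0}]$. Kyprianou's Theorem 8.1(ii) gives $\E_x[e^{-q\tau_0^-}\ind_{\{\tau_0^-<\infty\}}]=Z^{(q)}(x)-\frac{q}{\Phi(q)}W^{(q)}(x)$ for $q>0$. Substituting $\Phi(q)=\beta_1(q)$ gives the claim. As a check, at $x=0$ the right-hand side gives $\frac1q\bigl(1-\frac{q}{\beta_1(q)}W_i^{(q)}(0)\bigr)$, which is consistent with irregularity of $0$ for $(-\infty,0)$ in the bounded variation case.

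The main point requiring care is the bookkeeping of conventions rather than any hard estimate. Three things have to match:
\begin{itemize}
\item the sign flip, so that the scale functions of $-X^i$ coincide with the $W_i^{(q)}$ defined from $\psi_i(\theta)=\log\E[e^{-\theta X_1^i}]$;
\item strict versus non-strict passage, which is handled by the null-set argument above;
\item the use of $\tau_0^-$ defined with a strict inequality, as in Kyprianou, for Theorem 8.1.
\end{itemize}
The exclusion of subordinators ensures that $Y$ is a genuine spectrally negative process to which the theorem applies.
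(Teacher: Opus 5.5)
Your proposal is correct and follows essentially the same route as the paper. You apply Kyprianou's Theorem 8.1 to the dual spectrally negative process $-\wt X^i$, whose right inverse is $\beta_1(q)$ and whose scale functions are $W_i^{(q)},Z_i^{(q)}$, and then use Fubini--Tonelli to turn $\int_0^\infty e^{-qT}\bQ(\wt\Lambda_T^i>x)\,dT$ into $\frac1q\E^\bQ\bigl[e^{-q\sigma^+_{i,x}}\ind_{\{\sigma^+_{i,x}<\infty\}}\bigr]$. Your observation that $\{\wt\Lambda^i_T>x\}$ and $\{\sigma^+_{i,x}\le T\}$ can differ on $\{\sigma^+_{i,x}=T\}$, a Lebesgue-null set of $T$ for each path, is slightly more careful than the paper, which treats these two events as equal.
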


\begin{proof}
\cite[Theorem 8.1]{kyprianou_fluctuations_2014} implies
\[
\E^\bQ\!\lb[e^{-q\sigma_{i,x}^+}\ind_{\{\sigma_{i,x}^+<\infty\}}\rb]
=
Z_i^{(q)}(x)-\frac{q}{\beta_1(q)}W_i^{(q)}(x),
\qquad q>0,\ x\ge0.
\]
Applying the Fubini--Tonelli Theorem, we obtain
\begin{align*}
\int_0^\infty e^{-qT}\,\bQ(\wt\Lambda_T^i>x)\,dT
&=
\int_0^\infty e^{-qT}\,\E^\bQ\!\lb[\ind_{\{\wt\Lambda_T^i>x\}}\rb]\,dT \\
&=
\int_0^\infty e^{-qT}\,\E^\bQ\!\lb[\ind_{\{\sigma_{i,x}^+\le T\}}\rb]\,dT \\
&=
\E^\bQ\!\lb[\ind_{\{\sigma_{i,x}^+<\infty\}}\int_{\sigma_{i,x}^+}^\infty e^{-qT}\,dT\rb] \\
&=
\frac{1}{q}\,
\E^\bQ\!\lb[e^{-q\sigma_{i,x}^+}\ind_{\{\sigma_{i,x}^+<\infty\}}\rb]\\
&=
\frac{1}{q}
\lb(
Z_i^{(q)}(x)-\frac{q}{\beta_1(q)}W_i^{(q)}(x)
\rb).\qedhere
\end{align*}
\end{proof}

With the next lemma, we incorporate the initial condition $X_0^i=x_0^i\in\R$. In Subsection~\ref{subsec:panel}, we treat $x_0^i$ as a one-dimensional state that marks the firm's current distress level, while holding the remaining L\'evy-triplet parameters static across dates.

\begin{lemma}\label{lem:laplace}
Suppose that $X^i$ is a spectrally positive L\'{e}vy process and let $X_0^i=x_0^i\in\R$. Then for $q>0$, we have that
\[
\int_0^\infty e^{-qT}\,\bQ(\tau_i\le T)\,dT
\;=\;
\int_0^\infty \frac{e^{-y}}{q}\lb[Z_i^{(q)}(y-x_0^i) - \frac{q}{\beta_1(q)}\,W_i^{(q)}(y-x_0^i)\rb]\,dy.
\]
\end{lemma}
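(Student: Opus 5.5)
The plan is to combine the representation \eqref{eq:FT} with Lemma~\ref{lem:0init}, exchanging the order of the $T$- and $y$-integrals by Fubini--Tonelli. Everything is non-negative, so no integrability issue arises.

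By \eqref{eq:FT},
\[
\int_0^\infty e^{-qT}\,\bQ(\tau_i\le T)\,dT
=\int_0^\infty e^{-qT}\int_0^\infty e^{-y}\,\bQ(\Lambda_T^i>y)\,dy\,dT
=\int_0^\infty e^{-y}\int_0^\infty e^{-qT}\,\bQ(\Lambda_T^i>y)\,dT\,dy .
\]
The swap is justified by Tonelli, since the integrand is non-negative and jointly measurable in $(T,y)$. Measurability holds because $T\mapsto \bQ(\Lambda^i_T>y)$ is monotone and $y\mapsto\bQ(\Lambda^i_T>y)$ is monotone, so the function is Borel.

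Next, for $y\ge 0$, the identity noted after \eqref{eq:FT} gives $\{\Lambda^i_T>y\}=\{\sup_{s\le T}X^i_s>y\}=\{\wt\Lambda^i_T>y-x_0^i\}$. The inner integral is therefore the Laplace transform appearing in Lemma~\ref{lem:0init}, evaluated at level $x=y-x_0^i$.

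The one step needing care is that Lemma~\ref{lem:0init} is stated for $x\ge0$, while $y-x_0^i$ can be negative when $x_0^i>0$. For $x<0$ I would argue directly. Since $\wt\Lambda^i_T\ge \wt X^i_0=0>x$, we have $\bQ(\wt\Lambda^i_T>x)=1$ for all $T$, so the inner integral equals $1/q$. On the other side, $W_i^{(q)}(x)=0$ and $Z_i^{(q)}(x)=1$ for $x<0$, so the right-hand side of Lemma~\ref{lem:0init} also equals $1/q$. Hence the formula of Lemma~\ref{lem:0init} holds for every $x\in\R$. Alternatively, one could note that the passage time $\sigma^+_{i,x}$ is $0$ for $x<0$, and the Kyprianou identity extends trivially.

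Substituting the extended formula into the outer $y$-integral gives the stated expression.
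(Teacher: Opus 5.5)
Your proposal is correct and follows the paper's proof essentially step for step: \eqref{eq:FT}, a Fubini--Tonelli swap, Lemma~\ref{lem:0init} for $y-x_0^i\ge0$, and the direct value $1/q$ (matching $W_i^{(q)}=0$, $Z_i^{(q)}=1$) for $y-x_0^i<0$. One minor quibble on a point the paper leaves implicit: monotonicity in $T$ and in $y$ separately does not by itself give joint Borel measurability. Right-continuity of $y\mapsto\bQ(\Lambda^i_T>y)$ does, and so does joint measurability of $(T,y,\omega)\mapsto\ind_{\{\Lambda^i_T(\omega)>y\}}$ (since $\Lambda^i$ is c\`adl\`ag), so the swap is justified.
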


\begin{proof}
Equation \eqref{eq:FT} yields
\[
\bQ(\tau_i\le T) \;=\; \int_0^\infty e^{-y}\,\bQ\lb(x_0^i+\wt\Lambda_T^i> y\rb)\,dy \;=\; \int_0^\infty e^{-y}\,\bQ\lb(\wt\Lambda_T^i> y-x_0^i\rb)\,dy.
\]
Hence, Fubini--Tonelli once more gives
\[
\int_0^\infty e^{-qT}\,\bQ(\tau_i\le T)\,dT
\;=\;
\int_0^\infty e^{-y}
\lb(
\int_0^\infty e^{-qT}
\bQ\!\lb(
\wt\Lambda_T^i>y-x_0^i
\rb)\,dT
\rb)\,dy.
\]
If $y-x_0^i<0$, then $\wt\Lambda_T^i\ge0>y-x_0^i$ for all $T$, and the inner integral equals $1/q$. If $y-x_0^i>0$, Lemma~\ref{lem:0init} applies. The boundary case $y=x_0^i$ affects the outer integral on a Lebesgue-null set only. Since $W_i^{(q)}(x)=0$ and $Z_i^{(q)}(x)=1$ for $x<0$, we obtain the desired identity.
\end{proof}

\subsection{Phase-Type Jumps}
We now specialize the jump law to the phase-type class (see \cite[Chapter III]{asmussen_applied_2003} and \cite{bladt_matrix-exponential_2017} for background). This class is dense in the space of laws on $[0,\infty)$ under the topology of weak convergence, and closed under mixtures.

A phase-type law $U\sim\mathrm{PH}(\bm\alpha, \mathbf{T})$ on $m$ phases is the absorption time of a continuous-time Markov chain on $m$ transient states, started from the probability row vector $\bm\alpha$, with sub-generator matrix $\mathbf{T}$ and exit-rate vector $\mathbf{t}=-\mathbf{T}\mathbf{1}$. Its Laplace transform is rational:
\[
\hat g(\theta) := \E\lb[e^{-\theta U}\rb] = \bm\alpha(\theta I-\mathbf{T})^{-1}\mathbf{t} = \frac{N(\theta)}{D(\theta)},
\]
with $D(\theta)=\det(\theta I-\mathbf{T})$ monic of degree $m$, $\deg N\le m-1$, and $N(0)=D(0)$ since $\hat g(0)=1$. Without loss of generality, we take the representation to be minimal, so that $N$ and $D$ share no root. We highlight two examples: the exponential law $\Exp(\eta)$ is the one-phase case, $D=\theta+\eta$ and $N=\eta$; and the hyperexponential mixture $\sum_{i=1}^m p_i\Exp(\eta_i)$ has diagonal $\mathbf{T}$, $D=\prod_i(\theta+\eta_i)$ and $N=\sum_i p_i\eta_i\prod_{j\ne i}(\theta+\eta_j)$ (we use a mixture with $m=2$ in Section~\ref{sec:common-wh}). We present our key computational result:

\begin{theorem}[Phase-type default probabilities]\label{thm}
Let $X_0^i=-\delta$ with $\delta\ge 0$. Suppose $X^i$ is a spectrally positive L\'{e}vy process with volatility $\sigma\ge0$, drift $\mu\in\R$ ($\mu<0$ if $\sigma=0$), and a compound Poisson component with intensity $\lambda_J>0$ and i.i.d.\ jumps $U\sim\mathrm{PH}(\bm\alpha,\mathbf{T})$ on $m$ phases, with minimal jump transform $\hat g=N/D$ as above. For $q>0$, define
\begin{equation}\label{eq:poly}
P_q(\beta) \;=\;
\lb(\tfrac{1}{2}\sigma^2\beta^2 - \mu\beta - (\lambda_J + q)\rb)D(\beta) + \lambda_J\,N(\beta),
\end{equation}
and set
\[
d \;:=\; \begin{cases} m+2, & \sigma>0,\\ m+1, & \sigma=0.\end{cases}
\]
Then $\deg P_q = d$, $P_q$ has exactly one root in the open right half-plane $\{\real\beta>0\}$, which is $\beta_1(q)$, and the remaining $d-1$ roots lie in the open left half-plane. For $q>0$ such that the small roots $\beta_2(q),\dots,\beta_d(q)$ are simple, we have that
\begin{equation}\label{eq:thm}
\int_0^\infty e^{-qT}\,\bQ(\tau_i\le T)\,dT
\;=\;
\sum_{j=2}^{d}
\frac{
A_j(q)\bigl(\beta_1(q)-\beta_j(q)\bigr)
}{
\beta_1(q)\beta_j(q)\bigl(1-\beta_j(q)\bigr)
}
e^{\beta_j(q)\delta},
\qquad
A_j(q) \;=\; \frac{D(\beta_j(q))}{P_q'(\beta_j(q))}.
\end{equation}
\end{theorem}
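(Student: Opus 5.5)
Starting from Lemma~\ref{lem:laplace} with $x_0^i=-\delta$, I will compute the scale function $W_i^{(q)}$ explicitly by partial fractions and then integrate against $e^{-y}$ in closed form.

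\textbf{Step 1: the Laplace exponent and the polynomial $P_q$.} For the driver $\mu t+\sigma B_t+\sum_{k\le N_t}U_k$, the Laplace exponent of $-X^i$ is
\[
\psi_i(\beta)=\tfrac12\sigma^2\beta^2-\mu\beta+\lambda_J(\hat g(\beta)-1).
\]
Hence $(\psi_i(\beta)-q)D(\beta)=P_q(\beta)$. The leading coefficient of $P_q$ is $\tfrac12\sigma^2$ when $\sigma>0$, or $-\mu\neq0$ when $\sigma=0$, so $\deg P_q=d$. By minimality, $N$ and $D$ share no root, and a root of $D$ cannot be a root of $P_q$ (otherwise it would also be a root of $N$). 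Therefore the roots of $P_q$ are exactly the zeros of $\psi_i-q$, extended meromorphically.

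\textbf{Step 2: root location.} This is the main obstacle. On the right half-plane $\real\beta\ge0$ we have $|\hat g(\beta)|\le1$. I would run a Rouché argument on $\{\real\beta>0\}$, comparing $f(\beta)=\tfrac12\sigma^2\beta^2-\mu\beta-(\lambda_J+q)$ with $-\lambda_J\hat g(\beta)$.
\begin{itemize}[leftmargin=2em]
\item On the imaginary axis, $\real f(it)=-\tfrac12\sigma^2t^2-\lambda_J-q$, so $|f(it)|\ge\lambda_J+q>\lambda_J|\hat g(it)|$.
\item On large semicircles, $|f|$ grows (linearly when $\sigma=0$, with the sign condition $\mu<0$) while $\lambda_J|\hat g|$ stays bounded.
\end{itemize}
So $\psi_i-q$ has as many zeros in the right half-plane as the quadratic (or linear) $f$. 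When $\sigma>0$, $f$ has one positive and one negative real root; when $\sigma=0$ and $\mu<0$, its single root $-(\lambda_J+q)/\mu$ is positive. Either way there is exactly one zero in the right half-plane, namely $\beta_1(q)$. No root lies on the imaginary axis by the strict inequality above. The remaining $d-1$ roots therefore lie in the open left half-plane.

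\textbf{Step 3: partial fractions for $W^{(q)}$.} Assuming the roots are simple (with $\beta_1$ always simple), write
\[
\frac{1}{\psi_i(\beta)-q}=\frac{D(\beta)}{P_q(\beta)}=\sum_{j=1}^d\frac{A_j}{\beta-\beta_j},\qquad A_j=\frac{D(\beta_j)}{P_q'(\beta_j)}.
\]
This has no polynomial part since $\deg D<d$. Inverting the transform gives $W^{(q)}(x)=\sum_j A_je^{\beta_jx}$ for $x\ge0$, and then $Z^{(q)}(x)=1+q\sum_jA_j(e^{\beta_jx}-1)/\beta_j$.

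The identity $\sum_jA_j/\beta_j=-D(0)/P_q(0)$ holds by evaluating the partial fraction expansion at $0$. Since $P_q(0)=-qD(0)$ (because $N(0)=D(0)$), this equals $1/q$, so the constant in $Z^{(q)}$ cancels. Combining,
\[
Z^{(q)}(x)-\frac{q}{\beta_1}W^{(q)}(x)=q\sum_jA_je^{\beta_jx}\Bigl(\frac1{\beta_j}-\frac1{\beta_1}\Bigr).
\]
The $j=1$ term vanishes, which removes the growing exponential. This is exactly $q\sum_{j\ge2}A_j\frac{\beta_1-\beta_j}{\beta_1\beta_j}e^{\beta_jx}$.

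\textbf{Step 4: integrate against $e^{-y}$.} With $x_0^i=-\delta\le0$, the argument is $y+\delta\ge0$. Lemma~\ref{lem:laplace} then gives
\[
\sum_{j\ge2}A_j\frac{\beta_1-\beta_j}{\beta_1\beta_j}\,e^{\beta_j\delta}\int_0^\infty e^{-(1-\beta_j)y}\,dy .
\]
The integral converges because $\real\beta_j<0$, and it equals $1/(1-\beta_j)$. This yields \eqref{eq:thm}.
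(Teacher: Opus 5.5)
Your proof is correct. Steps 1, 3 and 4 follow the paper's argument essentially line for line: the same partial fractions for $W_i^{(q)}$, the same identity $\sum_j A_j(q)/\beta_j(q)=1/q$ obtained from $P_q(0)=-qD(0)$, the same cancellation of the $j=1$ term, and the same integral against $e^{-y}$.

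The genuine difference is the root location. The paper does not prove it; it imports it from Proposition~5.4 of the scale-function survey, which is stated for spectrally negative processes with rational transform. You prove it directly by Rouch\'e on right half-disks, comparing $f(\beta)=\tfrac12\sigma^2\beta^2-\mu\beta-(\lambda_J+q)$ with $\lambda_J\hat g$.

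Your route is self-contained and shows where each hypothesis enters. The bound $\real f(it)\le-(\lambda_J+q)$ gives the strict boundary inequality and rules out roots on the imaginary axis. The condition $\mu<0$ is what puts the root of the linear $f$ in the right half-plane when $\sigma=0$; for $\mu\ge0$ the count drops to zero, consistent with excluding subordinators. Counting zeros with multiplicity gives the simplicity of $\beta_1(q)$ for free. Your argument also shows that the right-half-plane count needs only $|\hat g|\le1$ there, with rationality used solely to make the remaining roots finite in number. The citation, by contrast, is shorter and places the result within the general theory.

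Two details are worth writing out. First, $D$ has no zeros in $\{\real\beta\ge0\}$ because the eigenvalues of $\mathbf{T}$ have negative real parts. Hence $\hat g$ is holomorphic on a neighbourhood of the closed half-disk, as Rouch\'e requires, and the zero count of $\psi_i-q$ there equals that of $P_q$. Second, the unique zero is the real root $\beta_1(q)$, since $\psi_i(0)-q=-q<0$ and $\psi_i(\theta)\to\infty$ along the positive axis.
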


Before the proof, we point out that for $\sigma=0$ the requirement that $X^i$ is spectrally positive forces $\mu<0$. This gives exactly the Cram\'er--Lundberg risk process of the introduction, and it is priced by \eqref{eq:thm} with $d=m+1$. Subsection~\ref{sec:sigma-zero} compares the Cram\'er--Lundberg case with the $\mu\ge0$ subordinator case.

\begin{proof}
The Laplace exponent of $-X^i$ is
\[
\psi_i(\theta) \;=\; -\mu\,\theta + \tfrac{1}{2}\sigma^2\theta^2 - \lambda_J\lb(1 - \hat g(\theta)\rb),
\]
defined for $\real\theta$ to the right of the poles of $\hat g$. Clearing the denominator, the equation $\psi_i(\beta)=q$ is equivalent to $P_q(\beta)=0$; minimality ensures no root of $P_q$ is cancelled. The leading coefficient of $P_q$ is $\tfrac12\sigma^2$ when $\sigma>0$ and $-\mu>0$ when $\sigma=0$, since $\mu<0$ in that case, so $\deg P_q=d$ as claimed. \cite[Proposition 5.4]{cohen_theory_2012}, stated there for spectrally negative L\'evy processes with rational jump transforms (the convention our $\psi_i$ already follows, being the Laplace exponent of $-X^i$), gives that $\beta_1(q)$ is simple and the roots are separated as stated.

Now, take $q>0$ and assume the left-half-plane roots are simple. The defining transform for $W_i^{(q)}$ gives
\[
\int_0^\infty e^{-\beta x}W_i^{(q)}(x)\,dx
=
\frac{1}{\psi_i(\beta)-q}
=
\frac{D(\beta)}{P_q(\beta)}.
\]
Since $\deg D = m < d$, the rational function $D/P_q$ is proper and the partial-fraction decomposition is
\[
\frac{D(\beta)}{P_q(\beta)}
=
\sum_{j=1}^{d}
\frac{A_j(q)}{\beta-\beta_j(q)},
\qquad
A_j(q)=\frac{D(\beta_j(q))}{P_q'(\beta_j(q))}.
\]
Inverting term-by-term yields
\begin{equation}\label{eq:Wpf}
W_i^{(q)}(x)
=
\sum_{j=1}^{d}
A_j(q)e^{\beta_j(q)x},
\qquad x\ge0.
\end{equation}
Evaluating the partial-fraction expansion at $\beta=0$, and using $N(0)=D(0)$ so that $P_q(0)=-qD(0)$, gives
\[
-\sum_{j=1}^{d}\frac{A_j(q)}{\beta_j(q)}
=
\frac{D(0)}{P_q(0)}
=
-\frac{1}{q},
\]
and hence
\[
\sum_{j=1}^{d}\frac{A_j(q)}{\beta_j(q)}
=
\frac{1}{q}.
\]
Therefore
\begin{equation}\label{eq:Zpf}
Z_i^{(q)}(x)
=
1+q\int_0^x W_i^{(q)}(u)\,du
=
q\sum_{j=1}^{d}
\frac{A_j(q)}{\beta_j(q)}
e^{\beta_j(q)x},
\qquad x\ge0.
\end{equation}

Using $X_0^i=-\delta$ with $\delta\ge0$ in Lemma~\ref{lem:laplace} gives
\[
\int_0^\infty e^{-qT}\,\bQ(\tau_i\le T)\,dT
=
\int_0^\infty
\frac{e^{-y}}{q}
\lb[
Z_i^{(q)}(y+\delta)
-
\frac{q}{\beta_1(q)}W_i^{(q)}(y+\delta)
\rb]\,dy.
\]
Putting \eqref{eq:Wpf} and \eqref{eq:Zpf} together yields
\[
Z_i^{(q)}(y+\delta)
-
\frac{q}{\beta_1(q)}W_i^{(q)}(y+\delta)
=
q\sum_{j=1}^{d}
A_j(q)
e^{\beta_j(q)(y+\delta)}
\frac{\beta_1(q)-\beta_j(q)}
{\beta_1(q)\beta_j(q)}.
\]
The term $j=1$ vanishes. For $j\ge2$ we have $\real\beta_j(q)<0$, so
\[
\int_0^\infty e^{-y}e^{\beta_j(q)y}\,dy
=
\frac{1}{1-\beta_j(q)},
\]
which proves \eqref{eq:thm}.
\end{proof}

\begin{remark}
For $\Exp(\eta)$ jumps ($m=1$), the polynomial \eqref{eq:poly} is a cubic with one big and two small roots, available in closed form by Cardano's formula. At $\sigma=0$ this degenerates to the quadratic $-\mu\beta^2-(\mu\eta+\lambda_J+q)\beta-q\eta$ with one big and one small root. For general $m$ the roots are computed numerically. When $(\mu,\sigma,\lambda_J,\bm\alpha,\mathbf{T})$ are fixed (as in the panel calibration of Subsection~\ref{subsec:panel}) the root-finding is precomputed once, and a change in the state $\delta$ only changes the factors $e^{\beta_j(q)\delta}$.
\end{remark}

\begin{remark}\label{rmk:distinct}
The simple-root hypothesis is only a convenience for writing \eqref{eq:thm} as a sum over simple poles. The transform itself is well-defined at root coalescences, but in such cases the residues must be computed. The right-hand side of \eqref{eq:thm} is meromorphic in $q\in\mathbb{C}$; Theorem~\ref{thm} therefore provides the meromorphic continuation of the Laplace transform ${\int_0^\infty e^{-qT}\,\bQ(\tau_i\le T)\,dT}$ from its natural domain $\mathbb{C}_+$ to all of $\mathbb{C}$ minus the small-root coalescence locus and the isolated points where some $\beta_j(q)\in\{0,1\}$.
For the numerical Laplace inversion, we choose a Talbot contour (details in Appendix~\ref{sec:Talbot}) with nodes that avoid these bad points.
\end{remark}

\begin{remark}
The proof of Theorem~\ref{thm} uses only the rationality of the jump transform, so it carries through verbatim for any jump law with rational Laplace transform (a slightly larger class than phase-type). Meromorphic L\'evy processes lead to infinite, rather than finite, partial-fraction expansions. See \cite[Sections 5.4 and 5.5]{cohen_theory_2012} for the rational and meromorphic classes, along with a wealth of further references.
\end{remark}

\subsection{The Cram\'er--Lundberg and Subordinator Cases}\label{sec:sigma-zero}

We consider here the two cases of \eqref{eq:driver} with $\sigma=0$: the Cram\'er--Lundberg risk process for $\mu<0$ and the subordinator case for $\mu\ge 0$.
For the Cram\'er--Lundberg case with $\Exp(\eta)$ jumps, the transform \eqref{eq:thm} gives
\[
\int_0^\infty e^{-qT}\,\bQ(\tau_i\le T)\,dT
=
\frac{(\eta+\beta_2(q))\,e^{\beta_2(q)\delta}}{\eta\,q\,(1-\beta_2(q))}.
\]
The subordinator case is excluded from Theorem~\ref{thm} because its paths are non-decreasing. Since it is the model of the previous literature and one of our benchmarks below, we record here precisely what separates it from its $\mu<0$ neighbour. Write $S(T)=\bQ(\tau_i>T)$ for the model survival function and
\[
H(T) \;=\; -\frac{1}{T}\log S(T)
\]
for the average forward hazard to maturity $T$. The following elementary lemma holds for \emph{any} L\'{e}vy subordinator driver.

\begin{lemma}\label{lem:plateau}
Let $X^i=-\delta+\wt X^i$ with $\delta\ge0$ and $\wt X^i$ a L\'evy subordinator started from zero and with Laplace exponent $\varphi$. Then
\[
e^{-\varphi(1)T} \;\le\; S(T) \;\le\; \min\lb(1,\; e^{\delta}e^{-\varphi(1)T}\rb),
\qquad\text{equivalently}\qquad
\varphi(1)-\frac{\delta}{T} \;\le\; H(T) \;\le\; \varphi(1),
\]
so $H(T)\to\varphi(1)$ as $T\to\infty$. 
\end{lemma}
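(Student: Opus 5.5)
Since the subordinator $\wt X^i$ is non-decreasing, its running supremum is the process itself: $\wt\Lambda^i_T=\wt X^i_T$. So $\Lambda^i_T=(\wt X^i_T-\delta)_+$, and by \eqref{eq:pd} the survival function is
\[
S(T)=\bQ(\tau_i>T)=\E^\bQ\lb[e^{-\Lambda^i_T}\rb]=\E^\bQ\lb[e^{-(\wt X^i_T-\delta)_+}\rb].
\]
The plan is to sandwich the integrand $e^{-(z-\delta)_+}$ between two exponentials in $z=\wt X^i_T\ge0$. Each of these can be evaluated through the Laplace exponent via $\E^\bQ[e^{-\wt X^i_T}]=e^{-\varphi(1)T}$, which holds because $\wt X^i$ is a subordinator with Laplace exponent $\varphi$.

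For the lower bound, use $(z-\delta)_+\le z$ for $z\ge0$ and $\delta\ge0$. Then $e^{-(z-\delta)_+}\ge e^{-z}$, and taking expectations gives $S(T)\ge e^{-\varphi(1)T}$.

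For the upper bound, use two elementary inequalities. First, $(z-\delta)_+\ge0$ gives $S(T)\le1$. Second, $(z-\delta)_+\ge z-\delta$ gives $e^{-(z-\delta)_+}\le e^{\delta}e^{-z}$, so $S(T)\le e^{\delta}e^{-\varphi(1)T}$. Together these yield $S(T)\le\min(1,e^{\delta}e^{-\varphi(1)T})$.

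The hazard form follows by applying the decreasing map $s\mapsto-\frac1T\log s$ for $T>0$. The lower bound on $S$ becomes $H(T)\le\varphi(1)$. The bound $S(T)\le e^\delta e^{-\varphi(1)T}$ becomes $H(T)\ge\varphi(1)-\delta/T$. The bound $S\le 1$ only adds $H\ge0$, so it can be dropped from the equivalent statement; the word ``equivalently'' should be read modulo this trivial constraint. Letting $T\to\infty$ and squeezing gives $H(T)\to\varphi(1)$.

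There is no real obstacle. The only point needing care is to justify $\sup_{s\le T}\wt X^i_s=\wt X^i_T$ (monotone paths) and to note that $\varphi(1)$ is finite. Finiteness holds for any subordinator, since $\varphi(1)=\mu+\int(1-e^{-x})\nu(dx)<\infty$ when $\int(1\wedge x)\,\nu(dx)<\infty$.
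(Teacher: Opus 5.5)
Your proof is correct and follows the paper's own argument: it uses monotone paths to get $\Lambda_T=(\wt X_T-\delta)_+$, sandwiches this as $\wt X_T-\delta\le(\wt X_T-\delta)_+\le\wt X_T$, evaluates with $\E[e^{-\wt X_T}]=e^{-T\varphi(1)}$, adds $S\le 1$, and applies $-\frac{1}{T}\log$. Your observation that the ``equivalently'' holds only up to the trivial constraint $H\ge 0$ is a fair refinement, which the paper glosses over.
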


Note that the driver \eqref{eq:driver} with $\sigma=0$ and $\mu\ge0$ is the case $\varphi(\theta)=\mu\theta+\lambda_J(1-\hat g(\theta))$ which satisfies $\varphi(1)=\mu+\lambda_J/(\eta+1)$.

\begin{proof}
Since $X$ is non-decreasing, $\Lambda_T=(-\delta+\wt X_T)_+$ and \eqref{eq:pd} gives $S(T)=\E[e^{-(\wt X_T-\delta)_+}]$. From $\wt X_T-\delta\le(\wt X_T-\delta)_+\le \wt X_T$ and $\E[e^{-\theta \wt X_T}]=e^{-T\varphi(\theta)}$ we obtain $e^{-T\varphi(1)}\le S(T)\le e^{\delta}e^{-T\varphi(1)}$, and $S\le1$ always. Take logarithms and divide by $-T$.
\end{proof}

The next lemma demonstrates the consequence for the shape of the curve, and contrasts it with the $\mu<0$ case under the classical Cram\'er--Lundberg net-profit condition $\E[X_1]=\mu+\lambda_J\E[U]<0$.

\begin{lemma}\label{lem:inversion}
Let $\sigma=0$ and let $X$ be given by \eqref{eq:driver} with $X_0=-\delta$, $\delta\ge0$. If $\mu\ge0$, then
\[
H(T_1)-H(T_2)\;\le\;\frac{\delta}{T_2},\qquad 0<T_1<T_2.
\]
If $\mu<0$ and $\E[X_1]<0$, then $\Lambda_\infty<\infty$ almost surely, $S(\infty)=\E[e^{-\Lambda_\infty}]>0$, and $H(T)\to0$ as $T\to\infty$.
\end{lemma}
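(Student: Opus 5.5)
The first claim follows directly from Lemma~\ref{lem:plateau}. With $\sigma=0$ and $\mu\ge0$ the driver is $-\delta$ plus a subordinator. For any maturity we have $\varphi(1)-\delta/T\le H(T)\le\varphi(1)$. Applying the upper bound at $T_1$ and the lower bound at $T_2$ gives
\[
H(T_1)-H(T_2)\le \varphi(1)-\bigl(\varphi(1)-\delta/T_2\bigr)=\delta/T_2 .
\]
This is the whole argument for $\mu\ge0$. The point is that the hazard curve cannot invert by more than $\delta/T_2$.

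For $\mu<0$ with $\E[X_1]=\mu+\lambda_J\E[U]<0$, I would first show $\Lambda_\infty<\infty$ almost surely. By the strong law of large numbers for L\'evy processes, $X_t/t\to\E[X_1]<0$ almost surely. Hence $X_t\to-\infty$, and the overall supremum $\sup_{s\ge0}X_s$ is almost surely finite: it is attained on a bounded time interval, on which paths are c\`adl\`ag and therefore bounded. Then $\Lambda_\infty=(\sup_s X_s)_+<\infty$ almost surely.

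Next, $\Lambda_T$ increases to $\Lambda_\infty$, so monotone (or bounded) convergence applied to \eqref{eq:pd} gives $S(T)=\E[e^{-\Lambda_T}]\downarrow \E[e^{-\Lambda_\infty}]=S(\infty)$. This limit is strictly positive because $e^{-\Lambda_\infty}>0$ almost surely. Finally, $\log S(T)\to\log S(\infty)\in(-\infty,0]$ is bounded, so $H(T)=-\log S(T)/T\to0$.

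There is no real obstacle here. The only point requiring care is the almost-sure finiteness of the supremum, and the strong law handles it. If a fluctuation-theory route is preferred instead, one can cite that $\beta_1(0)>0$ exactly when the mean of $-X$ is positive, which yields a proper, exponentially distributed-type law for the supremum via Lemma~\ref{lem:0init} as $q\downarrow0$.
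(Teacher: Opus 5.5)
Your argument is correct and follows the paper's proof: the first claim combines the upper bound of Lemma~\ref{lem:plateau} at $T_1$ with its lower bound at $T_2$, and the second uses drift to $-\infty$ followed by monotone convergence in $S(T)=\E[e^{-\Lambda_T}]$ (the paper cites \cite[Theorem 7.2]{kyprianou_fluctuations_2014} for the drift, where you invoke the strong law directly). The only slip is in your optional closing aside, where the sign is reversed: $\beta_1(0)>0$ holds exactly when $\mu+\lambda_J\E[U]>0$, so under the net-profit condition $\beta_1(0)=0$, and letting $q\downarrow0$ in Lemma~\ref{lem:0init} yields a proper law for the supremum only because $q/\beta_1(q)\to-(\mu+\lambda_J\E[U])>0$, giving $\bQ\bigl(\sup_{t\ge0}\wt X_t\le x\bigr)=-(\mu+\lambda_J\E[U])\,W_i^{(0)}(x)$.
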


\begin{proof}
The first claim subtracts the two bounds of Lemma~\ref{lem:plateau}: $H(T_1)\le\varphi(1)$ and $H(T_2)\ge\varphi(1)-\delta/T_2$. For the second, the net-profit condition gives $X_t\to-\infty$ almost surely (see \cite[Theorem 7.2]{kyprianou_fluctuations_2014}), so $\sup_{t\ge0}X_t$ is finite almost surely and $S(T)\downarrow S(\infty)=\E[e^{-\Lambda_\infty}]>0$ by monotone convergence; then $H(T)=-\log S(T)/T\to0$.
\end{proof}

We state these lemmas formally because the separation they express underlies the empirical results of Sections~\ref{sec:calibration} and \ref{sec:joint-calibration}, and because it gives a falsifiable test. Note that Lemma~\ref{lem:plateau}, and with it the first bound of Lemma~\ref{lem:inversion}, holds for every L\'evy subordinator driver, its proof using only the monotone paths and the Laplace exponent. The mechanism is worth saying in words. Default in \eqref{eq:taudef} requires the distress process to set a new record high; for a subordinator every instant sets or matches the record, so the accumulated hazard can only climb, whereas under the net-profit condition the supremum of the Cram\'{e}r--Lundberg process is almost surely finite and records become rarer as time passes.

In market terms, an inverted CDS spread curve, $S_{\mathrm{par}}(1\mathrm{y})>S_{\mathrm{par}}(5\mathrm{y})$, is the signature of an acutely distressed firm which is expected either to default soon or to survive (see the clinical studies of distressed names in \cite{cherubini_accounting_2008}). Par spreads are increasing transforms of average hazards up to discount weighting, so an inversion requires $H(1y)-H(5y)$ to be large and positive. The first bound of Lemma~\ref{lem:inversion} says that a subordinator driver cannot produce this beyond the buffer term $\delta/T_2$, which vanishes as the fitted distance to distress shrinks; the second says that the Cram\'er--Lundberg driver has finite total hazard $-\log S(\infty)$ and an average hazard decaying to zero, so that it is possible to fit decreasing segments of arbitrary relative size. Figure~\ref{fig:hazard-slope} illustrates the separation, which is verified empirically through market data in Subsection~\ref{sec:inversion}.

\begin{remark}\label{rmk:cirhaz}
The affine benchmarks in Section~\ref{sec:calibration} (CIR and JCIR) can also fit a decreasing average hazard: an intensity started above its long-run level mean-reverts downward. With the statics frozen, that channel is fixed, as demonstrated by the out-of-sample inversion test in Subsection~\ref{sec:inversion}.
\end{remark}

\begin{figure}[h!]
\centering
\includegraphics[width=0.8\textwidth]{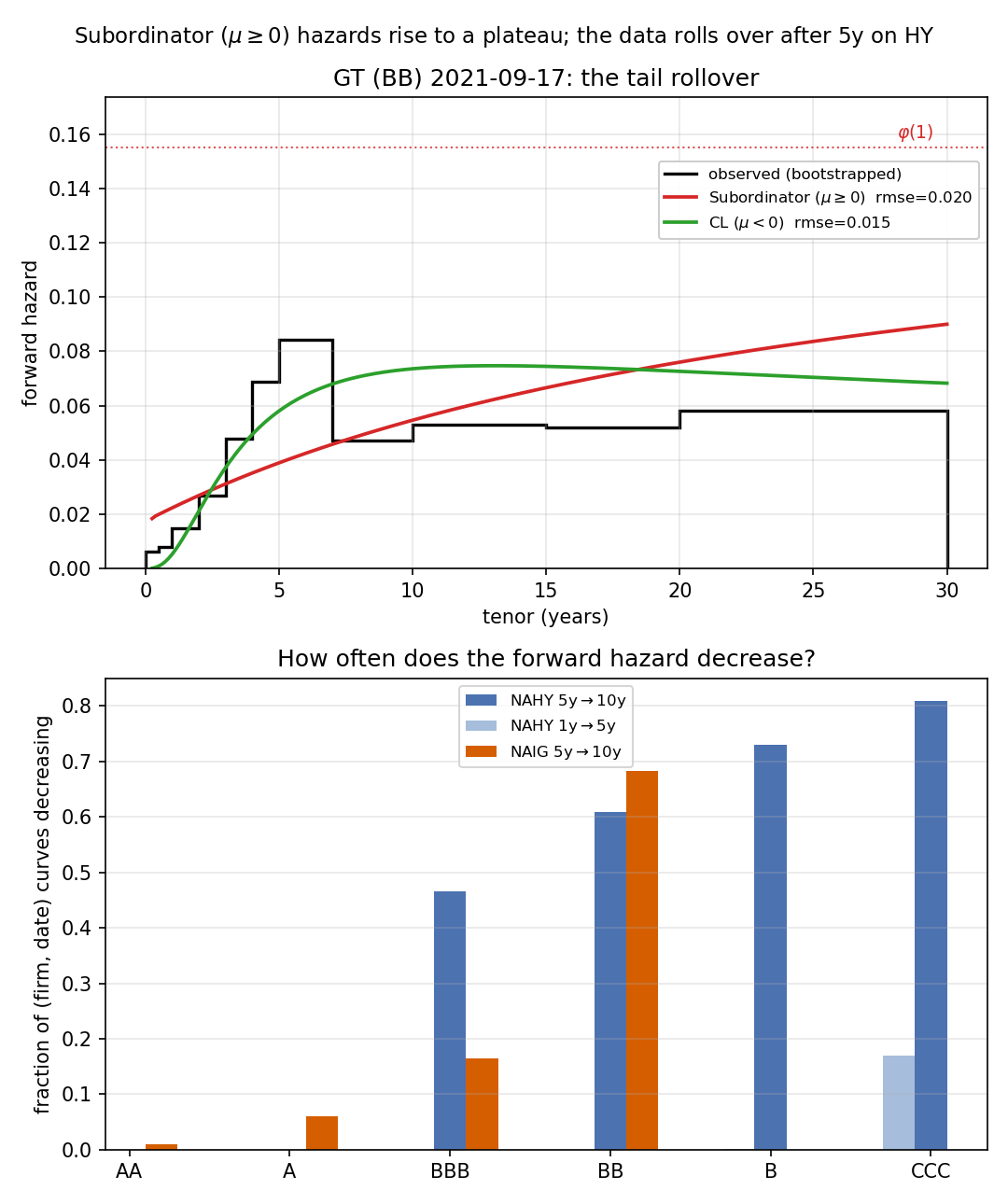}
\caption{The structural separation of Lemmas~\ref{lem:plateau} and~\ref{lem:inversion}. Top: a BB name whose observed forward hazards hump at 5--7y and roll over; the fitted subordinator climbs through the data towards its plateau $\varphi(1)$, while the fitted Cram\'er--Lundberg rolls over with the curve. Bottom: the frequency of decreasing forward-hazard segments in the bootstrapped curve panels, by index and segment, computed over the two years of the spread-inversion test of Subsection~\ref{sec:inversion} (2021-05-27 to 2023-06-30), on the high-yield Series 36 cohort and the investment-grade full basket (definitions in Section~\ref{sec:data} below).}
\label{fig:hazard-slope}
\end{figure}

\section{Common Jumps and a Wiener--Hopf Monte Carlo Scheme}\label{sec:common-wh}

The single-name theory of Section~\ref{sec:onesided} extends to a joint model of many firms at almost no additional cost, and we develop such a model here. We introduce a single common jump factor that preserves the phase-type marginals, and an exact simulation scheme whose clock is well-suited for the model. 

\subsection{Common Jump Construction}\label{sec:joint-construction}

Consider $n$ firms, each carrying one of the three subcases of \eqref{eq:driver} discussed in Section~\ref{sec:onesided} with its own idiosyncratic driver and initial distance-to-distress $\delta_i$. Let $Y$ be a single compound Poisson subordinator, shared by every firm and independent of the idiosyncratic drivers, with arrival intensity $\lambda_c>0$ and $\Exp(1/\gamma)$ jump sizes, so that $\gamma>0$ is the mean size of a common jump. The joint model adds the common factor to each firm's driver before the supremum is taken: firm $i$'s cumulative hazard and default time are
\begin{equation}\label{eq:joint-common}
\Lambda^i_t \;=\; \sup_{s\le t}\,\bigl(\wt X^i_s + Y_s\bigr),
\qquad
\tau_i \;=\; \inf\lb\{t\ge0:\ \Lambda^i_t \ge \delta_i + \xi_i\rb\}.
\end{equation}
Each such process is non-decreasing and adapted, so nothing in Section~\ref{sec:onesided} needs modification: a firm defaults when its distress (its own shocks plus the accumulated systemic ones) sets a sufficiently high record. For a single firm with $Y\equiv0$, \eqref{eq:joint-common} is exactly the construction of \eqref{eq:taudef}--\eqref{eq:Lambdadef} with $X_0^i=-\delta_i$, since $\xi_i>0$ almost surely. We choose to put $\delta_i$ with the barrier so that the supremum can be stated without truncation, and so that $\Lambda^i$ is a L\'{e}vy process started from zero, which is more natural for the scheme in the next subsection. 

Two features of \eqref{eq:joint-common} are worth distinguishing. The first concerns simultaneous defaults. At each arrival of $Y$, every firm's distress jumps by the same $\Exp(1/\gamma)$ amount, so any number of firms may default at that instant with positive probability. This is a singular component of Marshall--Olkin type, of the kind \cite{bo_counterparty_2015} and \cite{sun_marshallolkin_2017} obtain within the subordinator class. The second feature is that the construction keeps the single-name marginals within the class of Theorem~\ref{thm}. Seen from a single firm, the driver $\wt X^i+Y$ is a L\'evy process nearly of the form \eqref{eq:driver}: superimposing the firm's own jump stream, $\Exp(\eta)$ at rate $\lambda_J$, with the common one, $\Exp(1/\gamma)$ at rate $\lambda_c$, gives a compound Poisson process of total intensity $\lambda_J+\lambda_c$ whose jump law is the two-component mixture
\begin{equation}\label{eq:common-marginal}
\frac{\lambda_J}{\lambda_J+\lambda_c}\,\Exp(\eta)
\;+\;
\frac{\lambda_c}{\lambda_J+\lambda_c}\,\Exp(1/\gamma),
\end{equation}
which is phase-type with $m=2$. Theorem~\ref{thm} therefore prices the single-name marginals of the joint model exactly: at $d=4$ for LBM+Exp and $d=3$ for Cram\'er--Lundberg. The subordinator case is priced in closed form using its Laplace transform. The mixture's new ingredients are fixed by the dependence parameters rather than fitted separately, the second component having mean $\gamma$ and weight $\lambda_c/(\lambda_J+\lambda_c)$.

\subsection{Wiener--Hopf Monte Carlo Scheme}

Pricing a tranche on the joint law \eqref{eq:joint-common} requires, for every firm, the running supremum and the current position of its driver at a set of deterministic dates. A grid scheme is poorly suited to this: a discretization of mesh $\Delta t$ sees only the grid-point values, missing the excursion maxima between them, and so underestimates the supremum with a systematic bias of order $\sqrt{\Delta t}$ (see \cite{asmussen_discretization_1995}). The scheme of this subsection, which follows the Wiener--Hopf Monte Carlo approach of \cite{kuznetsov_wienerhopf_2011}, removes the time grid altogether. It makes use of three observations: the law of the supremum-position pair at an independent exponential time is explicit for our drivers; repeating exponential steps reaches deterministic dates; and the common compound Poisson factor of \eqref{eq:joint-common} arrives precisely at such an exponential time.

The first observation is known as the \emph{Wiener--Hopf factorization} at an independent exponential time. Let $\wt X$ be a driver from the class of Theorem~\ref{thm} started at zero, let $e_q\sim\Exp(q)$ be independent of it, and write $\wt\Lambda_{e_q}=\sup_{s\le e_q}\wt X_s$. The factorization (see \cite[Chapter 6]{kyprianou_fluctuations_2014}) shows that the supremum and the drawdown at the exponential time are independent, so that
\begin{equation}\label{eq:wh-pair}
\bigl(\wt\Lambda_{e_q},\ \wt X_{e_q}\bigr) \;\overset{d}{=}\; \bigl(M,\ M-D\bigr),
\qquad M\ \text{and}\ D\ \text{independent},
\end{equation}
where $M$ is distributed as $\wt\Lambda_{e_q}$ and $D$ as the drawdown $\wt\Lambda_{e_q}-\wt X_{e_q}$. For spectrally positive drivers both laws are explicit in the roots of the polynomial $P_q$ of Theorem~\ref{thm}. The drawdown is distributed as the supremum at $e_q$ of the dual process $-\wt X$, which is spectrally negative and therefore creeps upward, making that supremum exponential:
\[
D \;\sim\; \Exp(\beta_1(q))
\]
for any phase-type jump structure (\cite{pistorius_maxima_2006}; see also \cite[Section 8.1]{kyprianou_fluctuations_2014}). The supremum factor has the tail
\[
\bQ(M>x) \;=\; q\sum_{j=2}^{d} \frac{A_j(q)\,(\beta_1(q)-\beta_j(q))}{\beta_1(q)\,\beta_j(q)}\, e^{\beta_j(q)x},
\qquad x\ge0,
\]
by Lemma~\ref{lem:0init} and the partial fractions \eqref{eq:Wpf}--\eqref{eq:Zpf}.
Sampling the pair \eqref{eq:wh-pair} exactly therefore takes \emph{two independent scalar draws}: one of $M$ from the above, and one of $D$ from an exponential. Both are built from the same root system that the Talbot inversion of Appendix~\ref{sec:Talbot} already computes, so the sampler costs no additional per-model mathematics.

The second observation, due to \cite{kuznetsov_wienerhopf_2011}, is that chaining such steps approximates a deterministic horizon. Let $(M_k, D_k)_{k\ge1}$ be i.i.d.\ copies of the pair in \eqref{eq:wh-pair} and set, from $\wt X_{(0)}=\wt\Lambda_{(0)}=0$,
\begin{equation}\label{eq:wh-chain}
\wt X_{(k)} \;=\; \wt X_{(k-1)}+M_k-D_k,
\qquad
\wt\Lambda_{(k)} \;=\; \max\lb(\wt\Lambda_{(k-1)},\; \wt X_{(k-1)}+M_k\rb).
\end{equation}
By the stationarity and independence of L\'evy increments, the supremum over the $k$-th exponential segment is the position at its start plus an independent copy of $\wt\Lambda_{e_q}$, so $(\wt\Lambda_{(n)},\wt X_{(n)})$ has exactly the law of the pair at the $\mathrm{Erlang}(n,q)$ time $e_q^{(1)}+\dots+e_q^{(n)}$. No step of \eqref{eq:wh-chain} discretizes time; the only systematic error at a deterministic maturity $T$ is that the scheme evaluates at an Erlang time concentrated near $T$ rather than at $T$ itself, with relative width of order $1/\sqrt{qT}$. The base rate $q$ of the clock therefore exists to resolve deterministic maturities, and refining $q$ trades more steps for a tighter random grid.

The third observation is specific to our joint model: the compound Poisson common factor fires at exponential times, and between two consecutive arrivals each firm moves by its idiosyncratic increments alone. Adding the pricing clock rate and the arrival clock rate yields segments of i.i.d.\ $\Exp(q+\lambda_c)$ length, each ending, independently of the past, in a common arrival with probability $\lambda_c/(q+\lambda_c)$ and in a pricing step otherwise. On each segment we apply the recursion \eqref{eq:wh-chain} with $q$ replaced by $q+\lambda_c$, independently across firms: conditional on the path of $Y$ the firms are independent, so one shared segment grid serves the whole portfolio, with two draws per firm per segment. If the segment ends in a common arrival, a single $\Exp(1/\gamma)$ variable is drawn and added to every firm's position at once, and each running supremum is refreshed where the jump sets a new record. The common factor's event times are thus resolved exactly at any $q$, while the pricing steps alone still form a Poisson process of rate $q$, so the horizon remains Erlang-concentrated as before. Any driver with a common compound Poisson shock and a marginal covered by Theorem~\ref{thm} can be simulated this way. A step-rate sweep reported with the numerical controls in Section~\ref{sec:joint-calibration} confirms that $q=16$ per year suffices at the portfolio level.

The speed of the above scheme is best demonstrated against a fine-grid Euler simulation benchmark of the same joint law. Table~\ref{tab:r10-bench} prices a full index tranche portfolio ($95$ names, at parameters fitted in Section~\ref{sec:joint-calibration}; nothing in the comparison depends on those details) by both routes. The Wiener--Hopf scheme's accuracy is governed by the path count alone: it reaches $0.011$ in under a second and $0.002$ in fourteen. The Euler scheme must refine the grid and the path count together, and the accuracy it attains after roughly five hundred seconds, about $0.003$, is beaten by the Wiener--Hopf scheme in just fourteen seconds.

\begin{table}[t]
\centering\small
\begin{tabular}{lllrr}
\toprule
Scheme & Paths $K$ & Step & Wall (s) & $\max|\mathrm{dev}|$ \\
\midrule
WHMC (event-driven) & 2,000 & $q=16$ & 0.76 & 0.0112 \\
WHMC (event-driven) & 2,000 & $q=48$ & 2.18 & 0.0212 \\
WHMC (event-driven) & 8,000 & $q=16$ & 2.90 & 0.0067 \\
WHMC (event-driven) & 8,000 & $q=48$ & 8.69 & 0.0068 \\
WHMC (event-driven) & 32,000 & $q=16$ & 13.79 & 0.0022 \\
WHMC (event-driven) & 32,000 & $q=48$ & 40.34 & 0.0020 \\
\addlinespace
Euler (fine grid) & 2,000 & $\Delta t = 1/252$ & 7.55 & 0.0058 \\
Euler (fine grid) & 2,000 & $\Delta t = 1/1008$ & 29.79 & 0.0129 \\
Euler (fine grid) & 2,000 & $\Delta t = 1/4032$ & 119.18 & 0.0063 \\
Euler (fine grid) & 8,000 & $\Delta t = 1/252$ & 32.75 & 0.0082 \\
Euler (fine grid) & 8,000 & $\Delta t = 1/1008$ & 136.16 & 0.0048 \\
Euler (fine grid) & 8,000 & $\Delta t = 1/4032$ & 527.32 & 0.0032 \\
Euler (fine grid) & 32,000 & $\Delta t = 1/252$ & 128.92 & 0.0074 \\
Euler (fine grid) & 32,000 & $\Delta t = 1/1008$ & 514.03 & 0.0027 \\
Euler (fine grid) & 32,000 & $\Delta t = 1/4032$ & 2048.48 & 0.0052 \\
\bottomrule
\end{tabular}
\caption{Joint tranche pricing, Wiener--Hopf Monte Carlo vs fine-grid Euler on the same index tranche portfolio, both schemes simulating the joint law \eqref{eq:joint-common}; mean wall time and mean maximum absolute tranche-upfront deviation from the converged reference, over common-random-number seeds. The Wiener--Hopf scheme performs exact sampling at exponential steps, so its accuracy is set by the path count alone; at matched accuracy it is one to two orders of magnitude faster.}
\label{tab:r10-bench}
\end{table}

\section{Data}\label{sec:data}

We work with the constituents and tranches of two credit indices: CDX-NAHY (about 100 North American high-yield names) and CDX-NAIG (about 125 investment-grade names). All data were obtained from S\&P Global through Wharton Research Data Services (WRDS).

\paragraph{Single-name panels.}
Daily single-name quotes come from the Markit composites (note that Markit is now part of S\&P Global at the time of writing). Table~\ref{tab:data-singlename} summarizes the two panels, which run from 2021-05-27 to 2025-02-13. 
The span is scored out of sample in rolling quarterly folds (details in Subsection~\ref{sec:evaldesign}); the first year, the $248$ trading days ending 2022-05-27, additionally serves as an identification window for in-sample comparison.

\begin{table}[p]
\centering
\small
\begin{tabular}{lll}
\toprule
 & NAHY & NAIG \\
\midrule
Firms per date (median) & 98 & 125 \\
Distinct firms over window & 137 & 152 \\
Firm-dates & 93,458 & 119,364 \\
5y spread p5/p50/p95 (bp) & 87 / 285 / 1464 & 25 / 56 / 156 \\
Recovery & 0.30 & 0.40 \\
Running coupon & 500\,bp & 100\,bp \\
\bottomrule
\end{tabular}
\caption{Single-name CDS panels (Markit), 2021-05-27 to 2025-02-13 for both indices, the end date being where the treasury series stops. At each date they carry the constituents of the then-current index, so names enter and leave at roll-dates and the firm count is the number appearing at a given date.}
\label{tab:data-singlename}
\end{table}

\begin{table}[p]
\centering\small
\begin{tabular}{lrrrrrrr}
\toprule
Sector & AA & A & BBB & BB & B & CCC & All \\
\midrule
\multicolumn{8}{l}{\textbf{NAHY}} \\
Basic Materials & --- & 1 & 2 & 5 & 2 & 1 & 11 \\
Consumer Goods & --- & --- & 4 & 8 & 4 & 1 & 17 \\
Consumer Services & --- & --- & 3 & 11 & 11 & 10 & 35 \\
Energy & --- & --- & 2 & 7 & --- & 2 & 11 \\
Financials & --- & --- & 4 & 6 & 6 & 2 & 18 \\
Healthcare & --- & --- & 1 & 4 & 2 & 2 & 9 \\
Industrials & --- & --- & 1 & 9 & 4 & 1 & 15 \\
Technology & --- & --- & 1 & 2 & 3 & 3 & 9 \\
Telecom. Services & --- & --- & 1 & --- & 2 & --- & 3 \\
Utilities & --- & --- & 1 & 5 & 1 & 2 & 9 \\
All & --- & 1 & 20 & 57 & 35 & 24 & 137 \\
\addlinespace
\multicolumn{8}{l}{\textbf{NAIG}} \\
Basic Materials & 1 & 5 & 4 & --- & 1 & --- & 11 \\
Consumer Goods & 1 & 5 & 9 & 4 & --- & --- & 19 \\
Consumer Services & 4 & 13 & 6 & 1 & 1 & --- & 25 \\
Energy & --- & 2 & 13 & 2 & --- & --- & 17 \\
Financials & 12 & 2 & 6 & 3 & --- & --- & 23 \\
Healthcare & 1 & 7 & 2 & --- & --- & --- & 10 \\
Industrials & 9 & 8 & 4 & 1 & --- & --- & 22 \\
Technology & 2 & 5 & 3 & 1 & --- & --- & 11 \\
Telecom. Services & --- & --- & 3 & --- & --- & --- & 3 \\
Utilities & 1 & 6 & 3 & 1 & --- & --- & 11 \\
All & 31 & 53 & 53 & 13 & 2 & --- & 152 \\
\bottomrule
\end{tabular}
\caption{Distribution of constituents across sectors and rating buckets. Distinct firms over the window, each firm bucketed at its most frequent Markit rating across its quoted dates. Both populations are the full baskets of Table~\ref{tab:data-singlename}.}
\label{tab:data-sector-rating}
\end{table}

Table~\ref{tab:data-sector-rating} reports the distribution of the constituents across sectors and rating buckets. The high-yield basket concentrates in BB and B names with a substantial CCC tail in consumer services, while the investment-grade basket sits in A and BBB with its AA weight concentrated in financials.

Table~\ref{tab:data-singlename} reports both indices as \emph{full baskets}: at each date the panel carries the constituents of the on-the-run index, so that names enter and leave at the rolls. A basket is faithful to what a desk actually holds, but the population it scores changes through time. The alternative is a \emph{cohort}, a constituent list fixed once and followed thereafter, which is survivorship-prone, since the names dropping out are disproportionately the ones that defaulted. Neither convention is the right one on its own, so on the high-yield index we pulled both datasets, and Table~\ref{tab:data-cohort} sets them side by side. The Series 36 cohort is quoted on $100$ names in 2021 and $68$ by 2025, admitting no replacements, whereas the basket stays near $100$ throughout by taking in $31$ new names as $33$ leave. On the investment-grade index only the basket was pulled, and its turnover is milder, $24$ names in and $24$ out against a near-constant $125$. We report both high-yield universes alongside the single investment-grade basket as separate \emph{boards}, our term throughout for a constituent universe.

\begin{table}[t]
\centering\small
\begin{tabular}{lll}
\toprule
 & Cohort & Full basket \\
\midrule
Firms per date (median) & 78 & 98 \\
Firms quoted, 2021$\to$2025 & 100/97/82/74/68 & 103/115/109/108/100 \\
Distinct names over window & 100 & 137 \\
Names entering & 0 & 31 \\
Names leaving & 33 & 33 \\
Firm-dates & 76,643 & 93,458 \\
\bottomrule
\end{tabular}
\caption{The two NAHY datasets. The cohort is the constituent list of CDX-NAHY Series 36, on the run at the start of the window and held fixed thereafter, so no name ever enters and the decay is corporate-event attrition. The full basket takes the constituents of the then-current series at each date. Cohort results compare a fixed population and are survivorship-prone; full-basket results are faithful to the traded index. NAIG data were pulled only as a basket.}
\label{tab:data-cohort}
\end{table}

\paragraph{Index tranches.}
Tranche quotes come from S\&P Global, filtered per index to the 5Y term and, at each date, to the on-the-run series; the series advances annually, with roll-dates near October 1. The standardized attachment points are $0/15/25/35/100\%$ on the high-yield index and $0/3/7/15/100\%$ on the investment-grade index, and the index recoveries are $0.30$ and $0.40$ respectively; the composite construction and the pricing conventions are documented in \cite{sp_global_market_intelligence_credit_2024}. The dataset includes the upfront for both indices, and we work with this quantity throughout; it is expressed as a fraction of tranche notional and signed so that a positive value is a payment from the protection buyer at inception. Table~\ref{tab:data-tranche} summarizes the two panels. The identities converting a joint default-count law into these upfronts, and the definition of the \emph{tranche error} which serves as our joint calibration target, are collected in Appendix~\ref{sec:tranche-calibration}.

\begin{table}[ht]
\centering
\small
\begin{tabular}{lrrr}
\toprule
Tranche & Quotes & Upfront mean & Upfront range \\
\midrule
\multicolumn{4}{l}{\textbf{NAHY}, series 35/37/39/41/43, 500\,bp coupon} \\
0\%-15\% & 961 & 0.537 & [0.35, 0.70] \\
15\%-25\% & 961 & 0.023 & [-0.09, 0.23] \\
25\%-35\% & 961 & -0.106 & [-0.17, 0.04] \\
35\%-100\% & 961 & -0.181 & [-0.22, -0.14] \\
\addlinespace
\multicolumn{4}{l}{\textbf{NAIG}, series 35/37/39/41/43, 100\,bp coupon} \\
0\%-3\% & 962 & 0.352 & [0.23, 0.54] \\
3\%-7\% & 962 & 0.076 & [0.02, 0.21] \\
7\%-15\% & 962 & -0.005 & [-0.02, 0.04] \\
15\%-100\% & 962 & -0.035 & [-0.04, -0.03] \\
\bottomrule
\end{tabular}
\caption{Index tranche panels; 5Y on-the-run composites with annual series rolls (approximately Oct-1). Upfronts are mid quotes as a fraction of tranche notional; the running coupon is constant within each index and is read from the dataset.}
\label{tab:data-tranche}
\end{table}

\paragraph{Discount curve and the hazard-space target.}
The risk-free discount curve is bootstrapped daily from the FRED constant-maturity Treasury par-yield series. We solve for a zero-coupon curve at standard tenors of $0.5$, $1$, $2$, $3$, $5$, $7$, $10$, $20$, and $30$ years, and use the resulting discount factors to obtain risk-neutral default probabilities from the CDS prices (see, for instance, \cite[Subsection 3.1.4]{brigo_counterparty_2013}).
For each (firm, date)-pair we run an ISDA-style bootstrap with a quarterly premium leg, accrual-on-default, recovery taken from Markit's \texttt{cdsassumedrecovery} field, and the computed discount curve (Appendix~\ref{sec:isda-bootstrap} details the bootstrap methodology). The output is a vector of forward hazard rates at the observed maturities (typically $0.5$, $1$, $2$, $3$, $4$, $5$, $7$, and $10$ years). That vector, and not the quoted spread, is what every model in this paper is fitted against.

\section{Single-Name Calibration}\label{sec:calibration}

\subsection{Model Field and Calibration Protocol}\label{sec:model-field}

We now calibrate the framework of \eqref{eq:taudef}--\eqref{eq:pd} to the single-name panel data described in Section~\ref{sec:data}. The field consists of six models, listed in Table~\ref{tab:model-field}. Two are intensity-based benchmarks, Cox--Ingersoll--Ross (CIR) and CIR with jumps (JCIR); one is the structural Black--Cox model; two are members of our own class from Section~\ref{sec:onesided}, namely LBM+Exp with $\sigma>0$ in \eqref{eq:driver} and Cram\'er--Lundberg with $\sigma=0$ and $\mu<0$; and the sixth is a subordinator benchmark obtained by setting $\sigma=0$ and $\mu\ge 0$ in \eqref{eq:driver}, which lies outside the spectrally positive class by convention.

\begin{table}[h]
\centering
\begin{tabular}{lcll}
\toprule
Model & Params & State & Static \\
\midrule
Black--Cox & 3 & $x_0$ & $\mu,\sigma$ \\
CIR & 4 & $\lambda_0$ & $\kappa,\theta,\sigma$ \\
JCIR & 6 & $\lambda_0$ & $\kappa,\theta,\sigma,\alpha,\gamma$ \\
subordinator & 4 & $\delta$ & $\mu\,(\ge0),\lambda_J,\eta$ \\
Cram\'er--Lundberg (CL) & 4 & $\delta$ & $\mu\,(<0),\lambda_J,\eta$ \\
LBM+Exp & 5 & $\delta$ & $\mu,\sigma,\lambda_J,\eta$ \\
\bottomrule
\end{tabular}
\caption{The six-model field. The state is re-fit daily; the statics are constant per firm. Parameter counts include the state.}
\label{tab:model-field}
\end{table}

Every model is calibrated under one protocol. We split its parameter vector into a per-firm \emph{static} part, held constant across the dates of a fixed window, and a scalar \emph{state} re-fitted at every date. The split is what makes the parameters identifiable: a single CDS curve carries eight tenors and is therefore worth roughly one effective parameter. A static vector of four or five parameters can only be pinned down by pooling dates, while the day-to-day movement of a firm's credit is re-marked by the state. This split also embodies the question we wish to ask of a model: whether one re-marked coordinate is enough to absorb or explain a surprise in pricing.
 
Our calibration loss is the root-mean-squared error between bootstrapped and model-implied forward hazard rates at the observed maturities $\mathcal{T}_{i,t}$,
\[
\mathrm{RMSE}_{i,t}(\theta) = \sqrt{\frac{1}{|\mathcal{T}_{i,t}|}\sum_{T\in\mathcal{T}_{i,t}} \bigl(h^{\mathrm{boot}}_{i,t}(T) - h^\theta_{i,t}(T)\bigr)^2},
\]
where $h^\theta$ denotes the forward hazard implied by $\bQ(\tau\le T)$ under parameters $\theta$. Working in forward-hazard space rather than survival-probability space normalizes the residual scale across maturities; it is also what the bootstrap returns, so no further transformation is required at evaluation time. Alongside RMSE we report the mean absolute error $\mathrm{MAE}_{i,t}(\theta) = |\mathcal{T}_{i,t}|^{-1}\sum_{T\in\mathcal{T}_{i,t}}|h^{\mathrm{boot}}_{i,t}(T) - h^\theta_{i,t}(T)|$ as a robustness check, when appropriate.

Writing $\theta_i$ for firm $i$'s static vector and $s_{i,t}$ for its state on date $t$, the panel objective over a window of dates $\mathcal{D}_i$ is
\begin{equation}\label{eq:panel-loss}
\mathcal{L}_i\bigl(\theta_i, \{s_{i,t}\}\bigr) \;=\; \sum_{t\in\mathcal{D}_i} \mathrm{RMSE}_{i,t}\bigl(\theta_i, s_{i,t}\bigr),
\end{equation}
which we minimize by coordinate descent, alternating a cheap one-dimensional update of each $s_{i,t}$ at fixed $\theta_i$ against an outer minimization over $\theta_i$ at fixed states. Note that a fit of this kind is non-anticipative only on the last date of its window, and so the honest test is given by the out-of-sample designs of Subsection~\ref{subsec:panel}. The window length $\mathcal{D}_i$ is determined by an empirical sweep (see the end of Subsection~\ref{subsec:panel}) and we adopt a trailing three-month window, re-fitted quarterly, in the out-of-sample rolling design that follows. The in-sample comparison of Table~\ref{tab:singlename-field} instead uses the one-year identification window of Section~\ref{sec:data}.

The LBM+Exp, Cram\'er--Lundberg, and subordinator drivers share one parameterization: drift $\mu$, Brownian volatility $\sigma$, and compound Poisson jumps of intensity $\lambda_J$ with $\Exp(\eta)$ sizes, with the per-date state $\delta_t\ge0$ setting the distance to distress. LBM+Exp is the $\sigma>0$ case, at $m=1$ in Theorem~\ref{thm}, and Cram\'er--Lundberg is the $\sigma=0$ case of the same theorem. Both recover the default probability from \eqref{eq:thm} by Talbot inversion (Appendix~\ref{sec:Talbot}), the only difference being the degree of the polynomial to be solved: a cubic with two small roots for LBM+Exp, and a quadratic with one for Cram\'er--Lundberg. The subordinator, being $\sigma=0$ with $\mu\ge0$, needs no inversion at all, since $\Lambda_T=X_T$ and the default probability follows from the law of $X_T$ in closed form. In the two inverted cases the Talbot contour data depends only on the statics, so it is built once per static trial and each per-date evaluation reduces to a sum of complex exponentials $e^{\beta_j\delta_t}$; this is what makes a panel study of this size tractable.

In the Black--Cox model, a firm $i$ defaults at $\tau_i=\inf\lb\{t\ge0:\ x_0^i+\mu t+\sigma B^i_t\le0\rb\}$ where $x_0^i>0$, which admits a survival function in closed form (see \cite{bielecki_credit_2004}). The two reduced-form baselines are standard intensity models, fitting the Cox construction \eqref{eq:taudef} with the cumulative hazard given by the integral $\Lambda^i_t=\int_0^t\lambda^i_s\,ds$ of an intensity process rather than by a running supremum. CIR posits
\[
d\lambda^i_t=\kappa(\theta-\lambda^i_t)\,dt+\sigma\sqrt{\lambda^i_t}\,dW^i_t,
\]
and JCIR adds a compound Poisson term to get
\[
d\lambda^i_t=\kappa(\theta-\lambda^i_t)\,dt+\sigma\sqrt{\lambda^i_t}\,dW^i_t+dJ^i_t,
\]
with arrival rate $\alpha$ and exponentially distributed jump sizes of mean $\gamma$. The appropriate pricing formulas can be found in \cite{brigo_counterparty_2013}, which also contains extensions of the benchmark models with time-varying backbones (\cite{brigo_counterparty_2013} calls these CIR++, JCIR++, and AT1P: Analytically-Tractable First Passage Model). We do not pursue such extensions here. We also relegate the joint modelling of default probabilities and interest rates to future work.

\subsection{Panel Results}\label{subsec:panel}\label{sec:evaldesign}

Table~\ref{tab:singlename-field} reports the model field on both indices, fitted on the one-year identification window using the protocol of Subsection~\ref{sec:model-field}. LBM+Exp records the lowest median error on both, $0.0090$ on high-yield and $0.0017$ on investment grade, followed on high-yield by JCIR at $0.0093$. Cram\'er--Lundberg records $0.0098$ and $0.0017$, within eight basis points of LBM+Exp on high-yield and equal to it on investment grade, with one parameter fewer. The subordinator records $0.0163$ and $0.0048$, which are $1.8$ and $2.8$ times the LBM+Exp error. Since the subordinator differs from Cram\'er--Lundberg only in the sign constraint on $\mu$, the comparison between those two rows isolates the cost of that constraint: $66\%$ of median error on high-yield and $180\%$ on investment grade. The MAE columns preserve the RMSE ordering, except that Black--Cox and CIR trade places on high-yield. These are in-sample figures, and the designs below re-score the same field out of sample.

The Black--Cox figure on investment grade reflects a boundary solution rather than a poor interior fit: on $97\%$ of firm-dates the fitted $x_0$ pins at the limit of its search range, where the implied hazard is negligible at every tenor, so the recorded error is close to the size of the curve being fitted (median root-mean-square $0.0159$), and relaxing the limit leaves the fit unchanged. The shape of the target explains the collapse: investment-grade forward hazards rise monotonically in maturity, whereas the first-passage hazard of a Brownian motion is unimodal.

\begin{table}[t]
\centering\small
\begin{tabular}{lcrrrr}
\toprule
 & & \multicolumn{2}{c}{NAHY} & \multicolumn{2}{c}{NAIG} \\
\cmidrule(lr){3-4}\cmidrule(lr){5-6}
Model & Params & RMSE & MAE & RMSE & MAE \\
\midrule
Black--Cox & 3 & 0.0125 & 0.0104 & 0.0155 & 0.0135 \\
CIR & 4 & 0.0128 & 0.0101 & 0.0030 & 0.0024 \\
JCIR & 6 & 0.0093 & 0.0076 & 0.0022 & 0.0018 \\
subordinator & 4 & 0.0163 & 0.0131 & 0.0048 & 0.0041 \\
CL & 4 & 0.0098 & 0.0079 & \textbf{0.0017} & \textbf{0.0014} \\
LBM+Exp & 5 & \textbf{0.0090} & \textbf{0.0067} & \textbf{0.0017} & \textbf{0.0014} \\
\bottomrule
\end{tabular}
\caption{Single-name in-sample panel calibration, six-model field. Median per-firm RMSE and MAE of fitted vs bootstrapped forward hazards over the one-year panel (statics shared across dates, one state per date). Parameter counts include the state.}
\label{tab:singlename-field}
\end{table}

\paragraph{Rolling-origin evaluation.}
The panel fit is non-anticipative only on its last date. We therefore re-score the field under a design that is non-anticipative at every date. Each of the three boards of Section~\ref{sec:data} (the high-yield Series 36 cohort, the high-yield full basket, and the investment-grade full basket) is scored separately. Statics are re-fitted each quarter on the trailing three months, warm-started from the preceding fold, and scored on the following quarter, giving fourteen folds per board from mid-2021 to early 2025. Within each fold we take the median error across firms, and Table~\ref{tab:rolling-ladder} reports the average of those fourteen numbers. Figure~\ref{fig:fold-series} shows the per-fold medians of each model in sequence. The six series trace one common movement, high-yield error roughly doubling into late 2022 and halving again across 2024, with the spectrally positive pair in a tight band at the bottom throughout. The dotted verticals mark five market events, from the invasion of Ukraine (the onset of the 2022 climb) to the yen-carry unwind. The full-basket panel traces the same rise and decline as the cohort panel, so the easing across 2024 is a movement of the market rather than an artifact of attrition in the cohort. The high-yield ordering of Table~\ref{tab:singlename-field} is unchanged; on investment grade, JCIR edges ahead of Cram\'er--Lundberg and the subordinator falls to last. LBM+Exp records the lowest error on all three boards, JCIR is within two basis points of it on high-yield, Cram\'er--Lundberg within eight, and the subordinator records the highest error on each. Model by model, the full-basket board runs between three and six per cent above its cohort counterpart.

\begin{table}[t]
\centering
\small
\begin{tabular}{lrrrrrr}
\toprule
 & \multicolumn{2}{c}{NAHY cohort} & \multicolumn{2}{c}{NAHY full} & \multicolumn{2}{c}{NAIG} \\
\cmidrule(lr){2-3}\cmidrule(lr){4-5}\cmidrule(lr){6-7}
Model & RMSE & MAE & RMSE & MAE & RMSE & MAE \\
\midrule
Black--Cox & 0.0134 & 0.0111 & 0.0141 & 0.0117 & 0.0032 & 0.0027 \\
CIR & 0.0140 & 0.0111 & 0.0145 & 0.0116 & 0.0026 & 0.0021 \\
JCIR & 0.0100 & 0.0076 & 0.0105 & 0.0081 & 0.0019 & 0.0015 \\
subordinator & 0.0149 & 0.0116 & 0.0154 & 0.0120 & 0.0052 & 0.0045 \\
CL & 0.0106 & 0.0084 & 0.0112 & 0.0090 & 0.0020 & 0.0016 \\
LBM+Exp & \textbf{0.0098} & \textbf{0.0074} & \textbf{0.0104} & \textbf{0.0079} & \textbf{0.0018} & \textbf{0.0014} \\
\bottomrule
\end{tabular}
\caption{Rolling-origin evaluation. Mean over folds of the median out-of-sample RMSE and MAE, per board. The full-basket board scores every firm quoted in each fold window ($96$--$99$ firms per fold, against the cohort\textquotesingle s $68$--$98$) rather than the fixed survivor cohort.}
\label{tab:rolling-ladder}
\end{table}

\begin{figure}
\centering
\includegraphics[width=\textwidth]{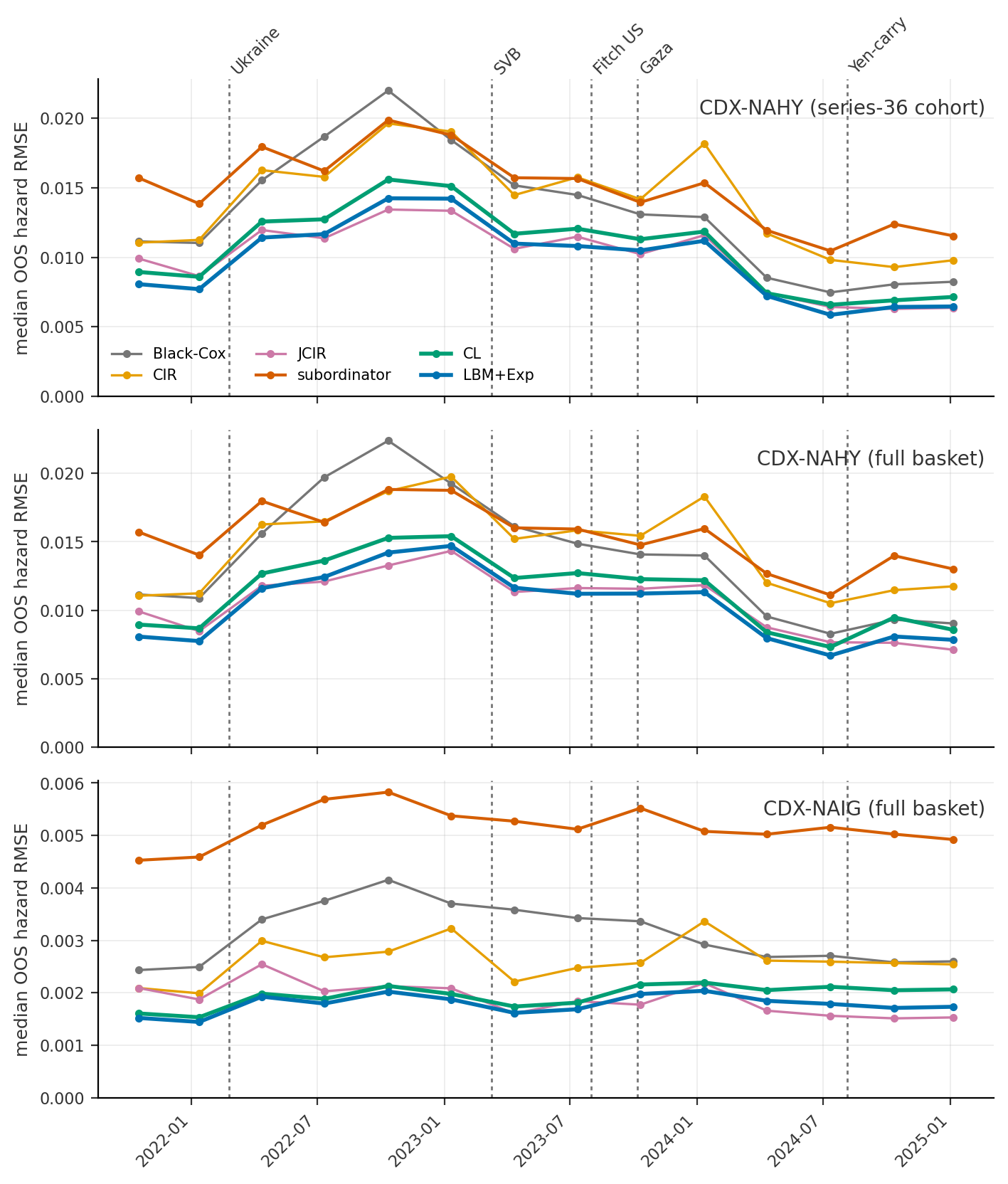}
\caption{Per-fold out-of-sample error across the rolling folds. The per-fold median hazard RMSE of each model, with dotted verticals at five market-event onsets: the invasion of Ukraine, the Silicon Valley Bank implosion, the Fitch US credit downgrade, the onset of the Gaza conflict, and the yen carry-trade unwind. Top panel: the high-yield Series 36 cohort; middle panel: the high-yield full basket; bottom panel: the investment-grade full basket. Note the different vertical scales.}
\label{fig:fold-series}
\end{figure}

\paragraph{Staleness and window length.}
We examine two questions pertaining to our experimental design. The first is how quickly a fit decays once made. Table~\ref{tab:rolling-staleness} divides the quarterly folds into three age buckets. On high-yield the error grows monotonically with age for every model, by between $11\%$ and $26\%$ across a three-month fold, the affine models degrading fastest and Black--Cox slowest. On investment grade the effect is roughly half as large, and for Black--Cox and the subordinator it is absent. The table shows the cohort board on high-yield; we omit the full basket because its growth rates agree with the cohort's to within one percentage point for every model.

\begin{table}[t]
\centering\small
\begin{tabular}{lrrrrrr}
\toprule
 & \multicolumn{3}{c}{NAHY} & \multicolumn{3}{c}{NAIG} \\
\cmidrule(lr){2-4}\cmidrule(lr){5-7}
Model & 0--30d & 31--60d & 61--92d & 0--30d & 31--60d & 61--92d \\
\midrule
Black--Cox & 0.0128 & 0.0138 & 0.0142 & 0.0031 & 0.0031 & 0.0031 \\
CIR & 0.0127 & 0.0144 & 0.0156 & 0.0025 & 0.0026 & 0.0028 \\
JCIR & \textbf{0.0090} & 0.0103 & 0.0113 & 0.0018 & 0.0019 & 0.0020 \\
subordinator & 0.0141 & 0.0155 & 0.0163 & 0.0052 & 0.0052 & 0.0052 \\
CL & 0.0099 & 0.0110 & 0.0116 & 0.0019 & 0.0020 & 0.0020 \\
LBM+Exp & \textbf{0.0090} & \textbf{0.0100} & \textbf{0.0108} & \textbf{0.0017} & \textbf{0.0018} & \textbf{0.0019} \\
\bottomrule
\end{tabular}
\caption{Static-parameter staleness. Median out-of-sample RMSE by age of the statics (days since the fit window closed), pooled across rolling folds.}
\label{tab:rolling-staleness}
\end{table}

The second question is how much history a fit should use. We refit the statics of all six models on trailing windows of $3$, $6$, $9$, and $12$ months, each ending 2022-05-27, and score every fit on the common block 2022-05-28 to 2023-06-30 with only the state re-fitted; Table~\ref{tab:window-sweep} reports the median out-of-sample RMSE, with the lowest entry in each row bolded. The three-month window is lowest, or tied lowest, for ten of the twelve model-index pairs. The two exceptions are JCIR on high-yield, where six months is lower by one unit in the fourth decimal place, and the subordinator on investment grade, where three months is marginally the worst of the four. Because the windows share an end date, the ordering does not reflect recency but the age of the data inside the fit: the longer windows reach back into 2021, when Figure~\ref{fig:fold-series} shows error at roughly half its 2022 level.

\begin{table}[t]
\centering\small
\begin{tabular}{lcccccccc}
\toprule
 & \multicolumn{4}{c}{NAHY} & \multicolumn{4}{c}{NAIG} \\
\cmidrule(lr){2-5}\cmidrule(lr){6-9}
Model & 3m & 6m & 9m & 12m & 3m & 6m & 9m & 12m \\
\midrule
Black--Cox & \textbf{0.0189} & 0.0191 & 0.0194 & 0.0198 & \textbf{0.0039} & \textbf{0.0039} & \textbf{0.0039} & \textbf{0.0039} \\
CIR & \textbf{0.0198} & 0.0208 & 0.0223 & 0.0230 & \textbf{0.0029} & 0.0033 & 0.0040 & 0.0044 \\
JCIR & 0.0143 & \textbf{0.0142} & 0.0148 & 0.0149 & \textbf{0.0023} & 0.0024 & 0.0026 & 0.0030 \\
subordinator & \textbf{0.0195} & 0.0207 & 0.0212 & 0.0218 & 0.0058 & \textbf{0.0055} & \textbf{0.0055} & 0.0056 \\
CL & \textbf{0.0150} & 0.0154 & 0.0155 & 0.0160 & \textbf{0.0021} & \textbf{0.0021} & 0.0022 & 0.0022 \\
LBM+Exp & \textbf{0.0142} & 0.0145 & 0.0147 & 0.0150 & \textbf{0.0021} & \textbf{0.0021} & \textbf{0.0021} & 0.0022 \\
\bottomrule
\end{tabular}
\caption{Window-length sweep. Median out-of-sample RMSE on the common block 2022-05-28 to 2023-06-30, statics fitted on trailing windows of 3--12 months ending 2022-05-27. The lowest entry in each row is bolded.}
\label{tab:window-sweep}
\end{table}

The two tables bear on the same design choice from opposite sides. Shortening the window from twelve months to three lowers high-yield out-of-sample error by between four and fourteen per cent, which is smaller than the within-fold decay of Table~\ref{tab:rolling-staleness}. 
We therefore adopt a trailing three-month window, re-fitted quarterly, and warm-started from the preceding fold.

\subsection{The Spread-Inversion Test}\label{sec:inversion}

A 1y--5y inversion is a firm-date at which the quoted curve satisfies $S_{\mathrm{par}}(1\mathrm{y})>S_{\mathrm{par}}(5\mathrm{y})$. Subsection~\ref{sec:sigma-zero} sets up a falsifiable prediction, which we test here on the data. For each model we report recall, the fraction of observed inverted firm-dates at which the fitted par-spread curve also inverts, with false positives counted on an equal-size control panel of CCC firm-dates whose quoted curves are not inverted. For NAHY, there are $170$ inversions in the first year, across five firms, and $212$ in the second, across seven; over $95\%$ carry CCC ratings, and each year is dominated by names that subsequently filed for bankruptcy. For NAIG, there are none in either of the first two years. In Table~\ref{tab:inversion-recall}, the in-sample column uses the year-one fits and the out-of-sample column freezes the year-one statics and re-fits only the per-date state on year-two. The result for the subordinator is the one the lemma asserts: $0$ of $170$ in sample and $0$ of $212$ out of sample, with no false positives either.

The inversion test divides the rest of the field. LBM+Exp retains $197$ of $212$ out of sample, some $93\%$, because the state $\delta$ doubles as a shape parameter: taking it small front-loads the hazard while keeping the limiting behaviour of the driver unchanged. The affine intensities go the other way, JCIR to $2$ of $212$ and CIR to $33$ of $212$ (see Remark~\ref{rmk:cirhaz}). Black--Cox retains $185$ of $212$, second only to LBM+Exp, despite sitting in the bottom half of the field by RMSE: the first-passage density of a Brownian motion started close to its barrier is unimodal, so a small fitted $x_0$ gives a hazard that rises and then falls. 

Cram\'er--Lundberg retains $177$ of $212$ out of sample, or $84\%$: below LBM+Exp and Black--Cox, far above the affine intensities, and consistent with a driver whose state is also a shape parameter. Its shortfall is not spread across those firm-dates but concentrated in two of the seven firms, and one of them is instructive. On Rite Aid, which filed for bankruptcy a few months after the second year closed, the $\mu\le0$ constrained fit lands at $\mu=-8\times10^{-9}$, which is to say on the $\mu=0$ boundary at which Lemma~\ref{lem:inversion} forbids inversion, and the model recalls none of that firm's $26$ inverted firm-dates; at any strictly negative drift, $-0.05$ included, it recovers all $26$.

\begin{table}[t]
\centering\small
\begin{tabular}{lrrrr}
\toprule
 & \multicolumn{2}{c}{In-sample (Y1)} & \multicolumn{2}{c}{Out-of-sample (Y2)} \\
\cmidrule(lr){2-3}\cmidrule(lr){4-5}
Model & Recall & FP & Recall & FP \\
\midrule
Black--Cox & 163/170 & 23 & 185/212 & 26 \\
CIR & 149/170 & 23 & 33/212 & 59 \\
JCIR & 131/170 & 7 & 2/212 & 5 \\
subordinator & 0/170 & 0 & 0/212 & 0 \\
CL & 158/170 & 20 & 177/212 & 49 \\
LBM+Exp & 162/170 & 15 & 197/212 & 32 \\
\bottomrule
\end{tabular}
\caption{Spread-curve inversion test on CDX-NAHY; fraction of observed-inverted firm-dates ($S_{\mathrm{par}}(1\mathrm{y})>S_{\mathrm{par}}(5\mathrm{y})$) where the model's par-spread curve also inverts, with false positives on an equal-size CCC control. Out-of-sample freezes year-one statics and refits only the per-date state. CDX-NAIG has zero inverted firm-dates in either year.}
\label{tab:inversion-recall}
\end{table}

\paragraph{Which sign of $\mu$ the data prefers.}
Two independent pieces of evidence favour the negative-drift class. First, when \eqref{eq:driver} is fitted with $\sigma$ held at zero and $\mu$ free, $94$ of the $100$ high-yield firms land on the Cram\'er--Lundberg side, $\mu<0$. Adding back the Brownian component at free drift lowers median panel error from $0.0098$ to $0.0090$, while forcing $\sigma=0$ and $\mu\ge0$ raises it to $0.0163$. Second, when the fitted curves are extrapolated beyond the quoted pillars, none of $400$ sampled subordinator fits produces a decreasing forward hazard anywhere on the 5y--30y segment, exactly as the plateau of Lemma~\ref{lem:plateau} predicts, against $35\%$ of Cram\'er--Lundberg fits that roll over after 5y.

\section{Joint Calibration and Index Tranches}\label{sec:joint-calibration}

\subsection{Tranche Calibration Target and Independence Baseline}\label{sec:joint-independence}

The single-name calibration of Section~\ref{sec:calibration} provides a marginal law for $\tau_i$, for each firm and date. This section is about the joint law of $(\tau_1,\dots,\tau_n)$, and we start by examining which instruments carry information about it.

The vanilla index swap does not. It pays protection on each constituent separately, so both of its legs are sums over firms of quantities depending on that firm's marginal law alone, and, by linearity, any change of dependence structure that preserves the marginals leaves the index spread unchanged. A tranche behaves differently. It absorbs only the portion of the portfolio loss falling in its attachment band, which is a nonlinear function of the number of defaults, and it is for that reason sensitive to whether defaults arrive together or apart. The tranches are thus suitable instruments to test the joint law, and they lead to our calibration target. 

Appendix~\ref{sec:tranche-calibration} converts a default-count law into the four tranche \emph{upfronts} and defines the \emph{tranche error} as model upfront minus observed upfront, which is the calibration target. Note that tranche upfronts are given as fractions of notional and take values in $[-1,1]$. We score a model by summing the absolute tranche error over the four tranches. 
Write $p_i(t)$ for firm $i$'s model-implied default probability at the on-the-run 5Y horizon and $N_T=\sum_i\ind\{\tau_i\le T\}$ for the default count. Under independent marginals, $N_T$ is a sum of independent (inhomogeneous) Bernoulli random variables. Priced from independent marginals, the tranche panel carries an error of $0.40$ to $0.44$ per board (Table~\ref{tab:rolling-tranche}). 

Every joint model below is scored using the rolling protocol of Subsection~\ref{subsec:panel}, applied to the tranches. Within each on-the-run series, all statics and dependence parameters are fitted on three months of quotes and the model then marks the following quarter under three levels of re-marking. With dependence \emph{frozen}, only the per-firm states $\delta_i$ are re-anchored to each day's CDS curves, so the tranche quotes supply nothing and the column is a genuine prediction. Hereafter \emph{frozen} always refers to this marking level for the dependence parameters. With the \emph{frequency} re-marked, the model is additionally allowed one parameter, the systemic arrival rate, fitted to that day's tranche quotes; with dependence \emph{fully} re-marked, the entire dependence vector is refitted. The levels cost different numbers of parameters in different models: two for the common-factor constructions of Section~\ref{sec:common-wh}, six for Duffie--G\^arleanu, and one for the single-factor Gaussian copula approach of \cite{li_default_2000}, whose single correlation makes its only available re-marking a full one. 

The tranches we price in this section are 5Y contracts, so no part of a single-name curve beyond the five-year point enters a tranche price. All single-name marking within the joint calibration therefore uses only the pillars up to and including 5y, which spares the joint fit from continuing out to thirty years for information the target does not contain.

\subsection{Bilevel Calibration}\label{sec:joint-bilevel-cal}

We carry out a bilevel calibration for each of the subcases of \eqref{eq:driver} plus a common compound Poisson $Y$ with rate $\lambda_c$ and exponential jumps with mean $\gamma$ (the joint model of Subsection~\ref{sec:joint-construction}). The outer loop searches the two dependence scalars $(\gamma,\lambda_c)$, minimizing the summed absolute tranche error over the frozen quarter with a penalty protecting single-name fit quality. For the inner loop, the dependence parameters $(\gamma,\lambda_c)$ are held fixed, and the per-firm static parameters together with the per-firm, per-date state $\delta_i$ are fitted against the observed hazard curves under \eqref{eq:common-marginal}. The state solve is cheap: the default probability is monotone in $\delta_i$, so the solve is a bracketed one-dimensional root-find. Tranche prices are computed by the Wiener--Hopf simulator under common random numbers. On the marking ladder of Subsection~\ref{sec:joint-independence}, the frequency level re-marks $\lambda_c$ and the full level re-marks both $\lambda_c$ and $\gamma$.

Because the three drivers differ only in the idiosyncratic component, comparing them isolates the driver class at fixed dependence. This gives a joint-level counterpart of the single-name comparison of Section~\ref{sec:calibration}. For the Cram\'er--Lundberg and LBM+Exp drivers the median fitted $\lambda_c$ is $0.15$ to $0.16$ per year on high-yield, a systemic event every six or seven years, with median severity $\gamma$ of $3.1$ to $4.1$ against median fitted distance-to-distress levels of roughly $10$ to $11$: each arrival erodes about a third of the median distance-to-distress. On investment grade the median $\lambda_c$ is higher and driver-dependent: $0.25$ for Cram\'er--Lundberg with severity $2.3$ against a median distance-to-distress of $12$, and $0.43$ for LBM+Exp with severity $1.5$ against a median distance-to-distress of $9$.
The median fitted product $\lambda_c\gamma$ across folds is $0.52$ per year for LBM+Exp and $0.60$ for Cram\'er--Lundberg on high-yield, and $0.61$ and $0.62$ respectively on investment grade, so on both indices the common factor erodes five to seven per cent of the median firm's distance-to-distress each year. The comparable erosion rates, and the higher fitted $\lambda_c$ on investment grade, reverse the direction one might expect from physical default rates, but they are natural under the risk-neutral measure, where $\lambda_c$ prices the market's weight on a systemic event rather than its physical frequency. A recession or financial crisis strikes both universes at the same calendar time, so the common term prices the same tail event on either index, and cross-index differences are left to the idiosyncratic drivers. Senior investment-grade tranches behave like disaster insurance, carrying near-zero expected loss in most states and a wipeout in a systemic event. Pricing them requires a substantial risk-neutral weight on that tail. The subordinator sits apart in these coordinates, fitting $\lambda_c\approx0.08$ with severity $0.7$ on high-yield and $\lambda_c\approx0.34$ with severity below $0.1$ on investment grade, so that its fitted product is an order of magnitude smaller than the negative-drift drivers', $0.06$ and $0.03$ respectively. Since the subordinator class lacks downward drift, anything higher would wipe out massive proportions of the indices.

\subsection{Benchmarks and Results}\label{sec:joint-results}

We compare two benchmarks to our construction from Subsection~\ref{sec:joint-construction}. The first is a single-factor Gaussian copula approach (see \cite{li_default_2000}) applied to the market marginals directly with a single flat correlation $\rho$ fitted to the year-one tranche panel. The copula correlates the default times through a Gaussian factor: firm $i$ defaults by horizon $T$ if and only if
\[
\sqrt{\rho}\,Z+\sqrt{1-\rho}\,\epsilon_i\;\le\;\Phi^{-1}\bigl(p_i(T)\bigr),
\]
with $Z,\epsilon_1,\dots,\epsilon_n$ i.i.d.\ standard normal and $p_i(T)$ now taken to be the bootstrapped market default probability rather than a model\textquotesingle s, so that, conditional on $Z$, the defaults are independent and the count law follows by deterministic Gauss--Hermite quadrature over $Z$. Because it takes the bootstrapped curves as given, its single-name error is zero by construction, which isolates the quality of the dependence structure from the quality of the marginals. Market practice improves the flat correlation with the base-correlation skew, one correlation per attachment point (this is even published in the data, as the \texttt{BaseCorrelation} field, and see also \cite{sp_global_market_intelligence_credit_2024}). The skew gives four separate models rather than a joint law, so we take the flat-$\rho$ copula as the like-for-like benchmark. Our second benchmark is the affine model of \cite{duffie_risk_2001}, in which each firm's default intensity is the sum of an idiosyncratic and a common JCIR process: twelve parameters in all, six of them common. We price this semi-analytically, with Monte Carlo in the one-dimensional common factor and the idiosyncratic parts in closed form.

\begin{table}[t]
\centering
\small
\begin{tabular}{lrrrrrr}
\toprule
 & \multicolumn{3}{c}{$\sum|\text{tranche error}|$} & \multicolumn{3}{c}{single-name RMSE} \\
\cmidrule(lr){2-4}\cmidrule(lr){5-7}
Model & frozen & frequency & full & frozen & frequency & full \\
\midrule
\multicolumn{7}{l}{\textbf{NAHY cohort}} \\
Independence & 0.398 & --- & --- & 0.0045 & --- & --- \\
subordinator & 0.127 & 0.091 & 0.074 & 0.0113 & 0.0116 & 0.0111 \\
CL & 0.113 & 0.084 & 0.057 & 0.0069 & 0.0069 & 0.0067 \\
LBM+Exp & 0.101 & 0.079 & 0.056 & 0.0045 & 0.0047 & 0.0052 \\
Duffie--G\^arleanu & 0.247 & 0.230 & 0.212 & 0.0207 & 0.0182 & 0.0176 \\
Gaussian copula & 0.192 & --- & 0.159 & --- & --- & --- \\
\addlinespace
\multicolumn{7}{l}{\textbf{NAHY full basket}} \\
Independence & 0.441 & --- & --- & 0.0050 & --- & --- \\
subordinator & 0.121 & 0.092 & 0.072 & 0.0120 & 0.0116 & 0.0115 \\
CL & 0.117 & 0.088 & 0.060 & 0.0078 & 0.0074 & 0.0079 \\
LBM+Exp & 0.118 & 0.093 & 0.066 & 0.0052 & 0.0053 & 0.0054 \\
Duffie--G\^arleanu & 0.269 & 0.251 & 0.231 & 0.0199 & 0.0187 & 0.0187 \\
Gaussian copula & 0.226 & --- & 0.189 & --- & --- & --- \\
\addlinespace
\multicolumn{7}{l}{\textbf{NAIG}} \\
Independence & 0.421 & --- & --- & 0.0008 & --- & --- \\
subordinator & 0.447 & 0.442 & 0.347 & 0.0093 & 0.0093 & 0.0060 \\
CL & 0.053 & 0.036 & 0.026 & 0.0011 & 0.0013 & 0.0012 \\
LBM+Exp & 0.048 & 0.031 & 0.022 & 0.0010 & 0.0011 & 0.0011 \\
Duffie--G\^arleanu & 0.392 & 0.385 & 0.380 & 0.0060 & 0.0064 & 0.0054 \\
Gaussian copula & 0.105 & --- & 0.097 & --- & --- & --- \\
\bottomrule
\end{tabular}
\caption{Tranche rolling folds (10 per board, series-aligned, 3m fit / 3m test). Mean over folds of the per-fold mean of the total absolute tranche error by marking level, and the median single-name hazard RMSE under each level. The copula prices market marginals exactly, so its single-name column is shown as ---. The three common-jump models share the dependence structure and differ only in the idiosyncratic driver.}
\label{tab:rolling-tranche}
\end{table}

The results of the rolling window experiment are given in Table~\ref{tab:rolling-tranche}. Folds are aligned with the annual series rolls: within each on-the-run series the dependence parameters and statics are fitted on three months and marked on the following three, giving ten folds per board. In the frozen column, where the dependence parameters are not re-marked, the common-jump construction closes most of the independence gap: the summed error falls from $0.398$ to $0.101$ on the high-yield cohort ($75\%$) and from $0.421$ to $0.048$ on investment grade ($89\%$), with LBM+Exp leading and Cram\'er--Lundberg within about a hundredth of it everywhere, marginally ahead on the full basket. Allowing the full dependence vector to be re-marked closes the gap further, to $0.056$ and $0.022$ ($86\%$ and $95\%$). The consistency of the single-name RMSE columns shows that our dependence construction does not corrupt the single-name fits.

The benchmarks trail on every board, and re-marking their dependence parameters to each day's tranche quotes closes little of the remaining gap. The copula's frozen error (cohort $0.192$ / basket $0.226$ / IG $0.105$) falls only to $0.159$ / $0.189$ / $0.097$ once its single correlation is re-marked. The Duffie--G\^arleanu benchmark does worse still: frozen at $0.247$ / $0.269$ / $0.392$, it is barely improved by re-marking all six of its common-factor parameters, and it carries single-name errors three to six times those of the running-supremum class. On investment grade in particular, the Duffie--G\^arleanu benchmark recovers almost no dependence ($0.380$ fully re-marked against an independence level of $0.421$), and its fitted common factor sits at the boundary of its admissible range, no mean reversion and maximal volatility, which we read as misspecification of the six-parameter common-factor form rather than under-searching.

The subordinator model is essentially tied with the other two drivers on high-yield under the frozen protocol, within $0.03$ of both on the cohort ($0.127$ against $0.101$ and $0.113$) and the full basket ($0.121$ against $0.117$ and $0.118$), because our rolling marks re-anchor on the $\le$5y segment, over which rising hazards are within reach of a monotone driver and the plateau effect from Lemma~\ref{lem:plateau} is not as binding. On investment-grade, by contrast, its quarterly fits are unstable: they land in small-severity corners in sample and mark worse than independence out of sample ($0.447$ frozen against $0.421$). The case against the $\mu\ge0$ driver therefore rests on the single-name field of Section~\ref{sec:calibration}, the inversion test of Subsection~\ref{sec:inversion}, and the investment-grade tranches, not on the high-yield tranche results alone.

\paragraph{Numerical controls.}

None of the results above rests on uncontrolled simulation error, and most of the pricing is not Monte Carlo at all. The copula approach and the subordinator marginals are deterministic; every Monte Carlo construction runs under common random numbers, so ladders are bit-reproducible; independent seed streams shift the monthly-mean summed error by between $0.010$ and $0.024$ mean absolute depending on the driver, an order of magnitude below the separations from independence and from the benchmarks in Table~\ref{tab:rolling-tranche}; and refining the Wiener--Hopf step rate from $q=16$ to $q=48$ moves individual tranche upfronts within that same Monte Carlo noise while preserving every ordering, so the rate $q=16$ is adopted throughout.

\section{Concluding Remarks}\label{sec:conclusion}
We have developed a multi-credit model that accounts for simultaneous defaults and is simulated with no time-discretization bias by a Wiener--Hopf Monte Carlo scheme. On daily CDX tranche data, the resulting two-parameter dependence structure closes $73\%$ to $89\%$ of the pricing gap left by independent marginals with dependence frozen, up to $95\%$ with dependence parameters fully re-marked, and dominates benchmark models including the one-factor Gaussian copula and the affine intensity model of Duffie--G\^arleanu. Because each firm's marginal remains phase-type under the common shock, the same root system that prices one firm's CDS curve is reused in pricing the whole portfolio's joint law, and the single-name comparison of Section~\ref{sec:calibration} shows that this building block is sound.

The multi-credit construction applies broadly to tranched products, whose payoffs are nonlinear functions of the number and timing of defaults within a pool. Beyond the corporate-credit setting considered here, it can be used for credit risk transfer (CRT) bonds, through which the government-sponsored enterprises pass the credit risk of a pool of residential mortgages to private investors in tranched form (see \cite{capponi_pricing_2026}). Therein, a common systemic shock correlating simultaneous mortgage defaults across the pool is the same object as $Y$ in \eqref{eq:joint-common}, so the dependence layer of our construction carries over directly; what must change is the marginal driver, because mortgage default risk is governed principally by the risk-neutral evolution of the borrower's income and of the house price relative to the loan balance, rather than by a firm's asset value or credit spread. Another application is the broader class of collateralized debt obligations (see \cite{brigo_credit_2010} for a survey), where the common-jump construction is a direct two-parameter alternative to the base-correlation approach, fitting the whole tranche stack from one joint law and one exact simulation scheme rather than one correlation per attachment point. We leave both applications for future work.

\appendix

\section{Laplace Inversion with a Talbot Contour}\label{sec:Talbot}
The default probability $\bQ(\tau_i\le T)$ at maturity $T$ is recovered from the Laplace transform \eqref{eq:thm} by numerical inversion along a Talbot contour (see \cite{abate_unified_2006}). The contour is given by
\[
q(\theta) \;=\; r\,\theta\,(\cot\theta + i),\qquad \theta\in(0,\pi),\qquad r := \frac{2N_\theta}{5T},
\]
which wraps around the negative real axis. With $\theta_l = l\pi/N_\theta$ for $l = 0,1,\dots,N_\theta-1$, weights $w_0=\tfrac{1}{2}$ and $w_l = 1 + i\lb(\theta_l + (\theta_l\cot\theta_l - 1)\cot\theta_l\rb)$ for $l\ge 1$, and the convention $q(\theta_0):=r$, the trapezoidal quadrature on this contour reads
\[
\bQ(\tau_i\le T)
\;\approx\;
\frac{r}{N_\theta}\,\real\!\lb[\,\sum_{l=0}^{N_\theta-1} w_l\,e^{q(\theta_l) T}\,\hat F\!\lb(q(\theta_l);\,\delta\rb)\,\rb],
\]
where $\hat F(q;\delta)$ is the closed-form Laplace transform from \eqref{eq:thm}. Convergence in $N_\theta$ is geometric (see \cite{abate_unified_2006}), and we take $N_\theta=18$ throughout. The deformation onto the Talbot contour requires the singularities of the continued transform (Remark~\ref{rmk:distinct}) to lie to its left; we audit the fitted parameter ranges numerically to confirm that the contour and its nodes stay clear of the small-root coalescence locus and of the points where some $\beta_j(q)\in\{0,1\}$. A computational observation important for the panel study in Section~\ref{sec:calibration}: the Talbot nodes $q(\theta_l)$, the polynomial roots $\beta_j(q(\theta_l))$, the residue coefficients $A_j(q(\theta_l))$, and the weights $w_l$ all depend only on the maturity $T$ and the static parameters, not on the per-date state $\delta$. Once these are cached per (firm, tenor)-pair, each per-date evaluation reduces to a vector of complex exponentials $e^{\beta_j\delta_t}$ summed over $j$ and $l$, with no further root-finding.

\section{Tranche Pricing Conventions and the Joint-Default Calibration Target}\label{sec:tranche-calibration}

This appendix collects the tranche-pricing identities that convert observed index tranche quotes into a calibration target for Section~\ref{sec:joint-independence}. 

Fix an index and a time horizon $T$. Let $n$ be the number of constituents and $R$ be the fixed recovery (Section~\ref{sec:data}). For $s\in[0,T]$, we let $N_s$ denote the cumulative default count, and $L_s = (1-R)N_s/n$ be the portfolio loss fraction. A joint default model $m$ gives a family of probability mass functions $f_k^m(s) := \bQ^m(N_s = k)$, $s\in[0,T]$. For a tranche with attachment $a$ and detachment $b$, the cumulative tranche loss is $L^{a,b}_s = \min(L_s,b)-\min(L_s,a)$ and the remaining notional is $O_s = (b-a) - L^{a,b}_s$. The protection leg pays tranche losses as they occur; the premium leg pays the index's fixed running coupon $S^\star$ (Table~\ref{tab:data-tranche}) on the remaining notional plus accrual-on-default. With $L_k := (1-R)k/n$, both legs of the tranche are linear functionals of $f^m$,
\begin{equation}\label{eq:tranche-el-ref}
\E^m\!\lb[L^{a,b}_s\rb] \;=\; \sum_{k=0}^{n} f_k^m(s)\cdot \min\,\bigl(\max(L_k - a,\,0),\, b - a\bigr),
\end{equation}
and $\E^m\lb[O_s\rb] = (b-a) - \E^m\lb[L^{a,b}_s\rb]$. Explicitly, with $D(0,\cdot)$ the discount curve of Section~\ref{sec:data} and $\{t_p\}$ the quarterly premium dates with year-fractions $\zeta_p$,
\[
\mathrm{Prot}^m = \int_0^T D(0,s)\,d\E^m\lb[L^{a,b}_s\rb],
\quad
\mathrm{RPV01}^m = \sum_{t_p\le T} D(0,t_p)\,\zeta_p\,\E^m\lb[O_{t_p}\rb] + \text{accrual},
\]
both evaluated on the time grid carrying $f^m$. Both indices trade against the fixed coupon with the balance settled at inception, and we work with the \emph{upfront} fraction $U\in[-1,1]$ of tranche notional; the upfront implied by model $m$ is $U^m([a,b]) = (\mathrm{Prot}^m - S^\star\,\mathrm{RPV01}^m)/(b-a)$, and the empirical \emph{tranche error} is $\Delta^m([a,b],t) = U^m - U^{\mathrm{obs}}$, the observable against which the dependence parameters can be calibrated. The default-count thresholds at which successive tranches begin absorbing losses (the smallest $k$ with positive tranche loss) are $\lfloor n\,a/(1-R)\rfloor + 1$: $\{1,22,36,51\}$ on the high-yield index and $\{1,7,15,32\}$ on the investment-grade index.

\section{ISDA Bootstrap of Forward Hazard Rates}\label{sec:isda-bootstrap}

The calibration target of Section~\ref{sec:calibration} is a vector of bootstrapped forward hazard rates, obtained from the quoted CDS par spreads by a standard ISDA-style bootstrap; see \cite[Subsection 3.1.4]{brigo_counterparty_2013} for the underlying CDS pricing identities and their inversion to default probabilities. Fix a firm and a date with quoted maturities $T_1 < \cdots < T_M$, par spreads $S(T_m)$, recovery $R$ (Markit field \texttt{cdsassumedrecovery}; this per-firm $R$ is distinct from the fixed index recovery of Appendix~\ref{sec:tranche-calibration}), and the bootstrapped risk-free discount factors $D(0,s)$. Model the survival function as $\bar F(s) = \exp(-\int_0^s h(u)\,du)$ with $h$ constant on each $(T_{m-1}, T_m]$. For a contract with quarterly premium dates $\{t_p\}$ and accrual-on-default, the protection and premium legs are
\begin{align*}
\mathrm{Prot}(T_m) &= (1-R)\int_0^{T_m} D(0,s)\,(-d\bar F(s)), \\
\mathrm{Prem}(T_m) &= S(T_m)\sum_{t_p \le T_m} D(0,t_p)\,\zeta_p\,\bar F(t_p) + \text{(accrual-on-default)},
\end{align*}
with $\zeta_p$ the year-fraction of premium period $p$ and the accrual term integrating, over each period, $S(T_m)\,D(0,s)\,(s-t_{p-1})\,(-d\bar F(s))$. The forward hazard on $(T_{m-1},T_m]$ is solved sequentially ($m=1,\dots,M$) from the par condition $\mathrm{Prot}(T_m) = \mathrm{Prem}(T_m)$, holding the previously solved segments fixed: a one-dimensional root-find per pillar. The resulting hazard vector $h^{\mathrm{boot}}(T_m)$ is the target against which each model's implied forward hazard $h^\theta(T_m)$ (from $\bQ(\tau\le T)$) is compared in the RMSE loss of Section~\ref{sec:calibration}. 

\section*{Acknowledgments}
Special thanks to Ben Hambly, Philip Protter, Max Reppen, and Andreas S{\o}jmark for sharing their insights and for many helpful discussions. Additionally, we thank the organizers and participants of the XIII Bachelier World Congress (Bologna, Italy) for the opportunity to present this work, and for useful feedback.

\printbibliography[heading=bibintoc]

\end{document}